\numberwithin{equation}{section}
\theoremstyle{plain}
\newtheorem{theorem}{Theorem}[section]
\newtheorem{example}[theorem]{Example}
\newtheorem{definition}[theorem]{Definition}
\newtheorem{lemma}[theorem]{Lemma}
\newtheorem{notation}[theorem]{Notation}
\newtheorem{remark}[theorem]{Remark}
\begin{document}
\title{Learning from the past, predicting the statistics for the future, learning an evolving system}

\author{Daniel Levin \and Terry Lyons \footnote{Mathematical Institute and Oxford-Man Institute, University of Oxford. Supported by ERC (Grant Agreement No.291244 Esig) and EPSRC (Grant Reference: EP/F029578/1).} \and Hao Ni\footnote{{Oxford-Man Institute of Quantitative Finance, University of Oxford. Supported by ERC (Grant Agreement No.291244 Esig). }} \\
\centerline{ University of Oxford} }




\maketitle
\begin{abstract}
Rough path theory is a mathematical toolbox providing for the deterministic modelling of interactions between highly oscillatory systems (rough paths). The theory is rich enough to capture and extend classical It{\^o} stochastic calculus but has far wider significance. Fundamental to this approach is the realisation that the evolving state of the system is best described or measured over short time intervals by considering the realised effect of the system on certain controlled systems (the measurement instruments). The approach is crucially more efficient than an approach based on sampling. Even in the limit, the sequence of time series obtained through finer sampling need not provide a sufficient statistic that could predict the effects of a Brownian motion in an interaction (\cite{friz2015physical})! This divergence in the analysis arising from the choice of feature set is relevant in any context where the data streams are highly oscillatory on normal time scales.
 
We bring the theory of rough paths to the study of non-parametric statistics on streamed data and particularly to the problem of regression where the input variable is a stream of information, and the dependent response is also (potentially) a path or a stream. We explain how a certain graded feature set of a stream, known in the rough path literature as the signature of the path, has a universality that allows one to characterise the functional relationship summarising the conditional distribution of the dependent response. At the same time this feature set allows explicit computational approaches through linear regression. 

Truncating \emph{the signature of a stream} leads to an efficient local description for a rough path. The intrinsic non-linearity provides a provable efficiency gain for the signature over linear descriptions of the path segment with the same number of features.  The gain offered by this approach is a potentially significant, and even transformational dimension reduction in the feature set needed to describe a highly oscillatory data stream.
 
By way of illustration, we consider prediction for stationary time series such as the AR model and ARCH models. In the numerical examples we examined, our more universal method for prediction achieves similar accuracy to the Gaussian Process (GP) approach with much lower computational cost especially when the sample size is large. We give several examples to show how, on normal time scales, a low dimensional statistic can be effective to predict the effects of a data stream.
\end{abstract}



\section{Introduction}
\subsection{Motivation for signatures as a feature set}
The interpretation and management of streamed data has become an important area of research for the computer science, database, and statistics communities. We are interested in tackling the question of efficient description of complex data streams in terms of their potential to affect other systems with which they interact. 

Consider a data stream built at some fine scale out of time stamped values, but which on normal scales seems continuous.  To capture the effects of this stream using classical approaches may involve sampling at very high frequencies, or even collecting all the ticks. The high dimensionality of the description has potential to make the identification of features for regression very challenging. 

There is a fundamental issue when one accesses a stream through sampling. It is quite possible for two streams to have very different effects and yet have near identical values when sampled at very fine levels (\cite{friz2015physical}).  We can see from our experience in the numerical approximation of stochastic differential equations (SDEs) that sampling can be a provably ineffective way to capture informative data. In \cite{clark1980maximum} Clark and Cameron study the numerical approximation of a solution to a multidimensional SDE driven by Brownian motion and ask how well can one approximate the solution at time $1$ (an effect) if one has only sampled the driving signal (stream) at $n$ deterministic locations in $[0, 1]$. \footnote{It might be tempting to think that Fourier series or wavelets might avoid these issues better than sampling? Unfortunately, Clarke, Cameron and Dickinson rule them out as no more effective features than sampling because they are also linear functionals of the path. (\cite{clark1980maximum}) }Clark and Cameron establish a lower bound demonstrating that in general sampling is relatively ineffective as a way of acquiring useful predictive information about the Brownian motion. Dickinson,  (Proposition 17, \cite{dickinson2007optimal}), shows that any $n$ dimensional \emph{linear} functional of the stream is a similarly poor predictor. 

On the other hand there are efficient $n$-dimensional descriptions based on the method we are talking about later and effective high order predictors. They allow orders of magnitude better predictions with a given dimension of description or feature set. Rough path theory exploits this approach in a decisive way for SDEs. In doing so it provides useful feature sets for capturing streams in much broader contexts. 

In this paper we will combine the essentially nonlinear theory of rough paths and regression techniques to answer the question how one should model and forecast the effects of the data streams robustly and effectively. We provide some examples for comparison. The practical tests show similar accuracy for our methods and Gaussian Process methods;  we are not surprised by this because at the very fine sampled scale our examples fits Gaussian Processes framework; however the performance of our approach is many factors better in terms of computational time.

One special case is the autoregressive model of discrete data streams with exogenous inputs (ARX or NARX) models,  and there has been a substantial body of literature to treat it either parametrically or non-parametrically (e.g. \cite{billings2005new}, \cite{bollerslev1990modelling}, \cite{hardle1990applied}). The signature-based approach can be applied to this setting and numerical results support that it is favorable in terms of comparable accuracy but lower computational cost when benchmarking with the non-parametric GP method.

The methods used to develop and justify this approach, such as the signature of a path and the shuffle product of tensors, are standard tools in the theory of rough paths and seem appropriate in this context of regression as well and provide a surprisingly unified and non-parametric approach. 

\subsection{Signatures} 
Summarising a stream over intervals is not a new idea. A function is said to be differentiable precisely when the chord is a good local summary of the function over small intervals. In this case the effect of smooth path can be very well approximated by its piecewise chordal approximation. The signature allows one to extend this idea in a fundamental way, and provide an effective way of describing the complex oscillatory streams as well as the smooth ones. 

Once one has fixed attention on a segment of the stream, then the chord is independent of how one parameterises the path in the interval. The same is true of the signature. If one were to split the segment into smaller segments and rearrange them, the chord would not notice this either. It is a commutative summary of the stream on the segment. The higher terms in the signature are not like this at all. They become progressively less commutative with degree. This explains their efficiency as they capture order of events.  

Mathematically the signature is a homomorphism from the monoid of streams under concatenation into the grouplike elements of a closed tensor algebra! We can truncate it and these truncations provide a graded summary, and progressively more order sensitive of the data streams in terms of its effects. Rough path theory teaches us to exploit the total ordering of the data and summarise the data over segments. However in this paper we only consider the signature over the whole interval. The signature (Definition \ref{DefSig}) provides a way to talk holistically about information in the segments of the stream. It does so without directly engaging with the micro-structure of the data, the individual ticks etc.

\subsection{The expected signature - a model-free description of empirical measure on paths}
If the signature is an effective way of describing a data stream, then a second and important remark is that the expected signature of a random stream provides a non-parametric way to fully describe the measure on the stream space (see Theorem \ref{theoremES}). We think that the ability to systematically characterize the empirical probability measure on the streams is exciting. The grading allows valuable approximative description for the law of the data stream.

\subsection{Literature Review on Rough Paths Theory in Statistical Inference}
Consider a controlled differential equation in the form of (\ref{controlledEquation}), where $Y$ represents the state of a complex system, and it evolves as a function $f$ of its present state and the infinitesimal increments of a driving signal $X$.
\begin{eqnarray}\label{controlledEquation}
dY_{t} = g(Y_{t})  dX_{t}, Y_{0} = y_{0}.
\end{eqnarray}
The theory of rough path was initially developed to capture and make precise the interactions between highly oscillatory and non-linear systems. It allows a deterministic treatment of SDEs or even controlled differential equation driven by much rougher signals than semi-martingales. The signature of a path is a fundamental object in rough path theory. The signature of a path $X$ arises naturally when solving a linear differential equation driven by $X$ and it is also a basis to represent a solution to a general controlled differential equation (\ref{controlledEquation}) with smooth vector fields as an analogy to Taylor expansion. 

So far the rough path theory has been developed and used mainly in the field of stochastic analysis (\cite{lyons1998differential}). Recently several papers concerned with applying the rough paths theory to statistical inference have appeared. For example, in \cite{papavasiliou2011parameter} the expected signature matching approach has been proposed for parameter estimation of stochastic processes as an extension to the moment matching method. To the best of our knowledge, the early version of our paper has been the first paper to introduce the signature of a path as a feature set for regression analysis; our focus has been on building a systematic framework for the regression problem with streamed inputs using the concept of the signature, while the other papers focus more on developing examples of application. The signature-based approach has been successfully applied to the classification and prediction problems of financial time series  (see \cite{Greg} , \cite{lyons2014feature}).  

The two main objectives of this paper are to present the elements of rough paths theory to a statistical audience, and to demonstrate its usefulness for statistical inference for streamed data.  
\subsection{The outline of the paper}
We start with introduction of the key concept - \emph{the signature of a path} and its relevant properties which we will need in the sequel.  In Section \ref{SignatureApproachSection}, a general signature approach is proposed in order to estimate the  functional relationship between an input stream and the corresponding noisy output, and its calibration and forecast methods are discussed. 

In Section \ref{timeSeriesEs}, we apply our method to stationary time series modeling and explore the link between our models and classic parametric time series models. Finally in Section \ref{NumericalExamples}, we examine three examples using our method, in which three sets of time series data are generated based on either linear or non-linear time series models and benchmark our approach with the standard AR calibration as well as non-parametric GP approach.
\section{Preliminaries}\label{signatureSection}
In this section, we review the basics of the theory of rough paths, which are useful for statistical inference. We aim to provide a self-supporting but minimalist account of key concepts and results of rough path theory. For readers interested in more details about the theory of rough paths,  we refer to \cite{friz2014course}, \cite{RoughPaths} and \cite{lyons2002system}. 
\subsection{The signature of a path}
Let $X: J \rightarrow E:= \mathbb{R}^{d}$ be a $d$-dimensional path where $J$ is a compact interval. We say that a path $X$ is of finite $p$-variation for certain $p \geq 1$ if the $p$-variation of $X$ is defined by
\begin{eqnarray*}
\vert\vert X \vert\vert_{p, J} = \left( \sup_{\mathcal{D}_{J} \subset J}  \sum_{l} \vert\vert X_{t_{l}} - X_{t_{l-1}} \vert\vert^{p} \right)^{1/p} < \infty,
\end{eqnarray*}
where the supremum is taken over all possible finite partitions $\mathcal{D}_{J} = \{t_{l}\}_{l}$ of the interval $J$. Let $\mathcal{V}^{p}(J, E)$ denote the set of  any continuous path $X: J \rightarrow E$ of finite $p$-variation.
\begin{example}
A Brownian motion path up to time $T$, denoted by $B_{[0, T]}$ has finite $p$-variation almost surly if and only if $p >2$. 
\end{example}
The signature of a path $X$ in $E$ is a formal power series with $d$ non-commutative indeterminates whose coefficients are iterated integrals of the path. The definition of the signature of a path is given as follows.
\begin{definition}[The Signature of a Path]\label{DefSig}
Let $J$ be a compact interval and $X \in \mathcal{V}^{p}(J, E)$ such that the following integration makes sense. The signature $S(X)$ of $X$ over the time interval $J$ is defined as follows
\begin{eqnarray*}
\mathbf{X}_{J} = (1, X^{1}, \cdots, X^{n}, \cdots),
\end{eqnarray*}
where for each integer $n \geq 1$,
\begin{equation*}
X^{n}=\underset{\underset{u_{1},...,u_{n}\in J}{u_{1}<...<u_{n}}}{\int \cdots \int } dX_{u_{1}}\otimes ...\otimes dX_{u_{n}} \in E^{\otimes n}.
\end{equation*}
The truncated signature of $X$ of order $n$ is denoted by $S^{n}(X)$, i.e. $S^{n}(X) = (1,X^{1},...,X^{n})$, for every integer $n \geq 1$.
\end{definition}
\begin{example} 
\begin{enumerate}
\item When $X$ is of finite 1-variation, the integration can be understood in the sense of the Stieltjes integral;
\item When $X$ is a Brownian motion path, its signature can be defined in the sense of the It\^o integral or the Stratonovich integral.
\end{enumerate}
\end{example}
\begin{notation}
Let $T((E))$ denote the tensor algebra space
\begin{eqnarray*}
T((E)): = \left \{ (a_{0}, a_{1}, \cdots, a_{n}, \cdots) \vert \forall n \geq 0, a_{n} \in E^{\otimes n} \right \}. 
\end{eqnarray*}
Let $T^{n}(E)$ denote the $n^{th}$ truncated tensor algebra space
\begin{eqnarray*}
T^{n}(E): = \bigoplus_{i = 0}^{n} E^{\otimes i},
\end{eqnarray*}
where by convention $E^{\otimes 0} = \mathbb{R}$.
\end{notation}
Obviously the signature of $X$ is an element of $T((E))$, and the truncated signature of $X$ of order $n$ lies in $T^{n}(E)$. To define the coordinate signature of $X$, let us introduce the notation $A^{*}$, which denotes the set of multi-indexes with entries in $\{1, \cdots, d\}$. The length of an index $I$ is denoted by $\vert I\vert$. For any index $I \in A^{*}$ whose length is $n$, $I$ can be written as $I  = (i_{1}, \dots, i_{n})$, where $i_{j} \in \{1, \cdots, d\}, \forall j \in \{1, \cdots, n\}$. By convention the empty index denoted by $() $, and it is in $A^{*}$. 
\begin{remark}
Let $(e_{i})_{i=1}^{d}$ be a basis of $E$. For any positive integer $n$, the space $E^{\otimes n}$ is isomorphic to the free vector space generated by all the words of length $n$ in $A^{*}$ and $(e_{i_{1}} \otimes \dots \otimes e_{i_{n}})_{(i_{1}, \cdots, i_{n}) \in \{1, \cdots, d\}^{n}}$ form a basis of $E^{\otimes n}$. The coordinate signature of $X$ indexed by $I$ denoted by $C_{I}(X)$ is defined to be
\begin{eqnarray*}
C_{I}(X)  = \underset{\underset{u_{1},...,u_{n}\in J}{u_{1}<...<u_{n}}}{\int \cdots \int} dX_{u_{1}}^{(i_{1})}dX_{u_{2}}^{(i_{2})} \ldots dX_{u_{n}}^{(i_{n})} \in \mathbb{R}.
\end{eqnarray*}
Thus it follows that
\begin{eqnarray*}
S(X) = 1 + \sum_{n=1}^{\infty} \sum_{\vert I \vert = n} C_{I}(X)  e_{i_{1}} \otimes e_{i_{2}}\cdots \otimes e_{i_{n}} \in T((E)),
\end{eqnarray*}
where the second sum is taken over all indexes with length $n$.
\end{remark}
\begin{example}\label{EgSigOfLinearPath}
Let $X: [0, T] \rightarrow E$ be a \textbf{linear} path. It implies that for any $t \in [0, T]$, $X_{t} = X_{0} + \frac{X_{T}- X_{0}}{T}t$. For any integer $n \geq 1$ and any $I = (i_{1}, \dots, i_{n})$,
\begin{eqnarray*}
C_{I}(X) &=&  \underset{\underset{u_{1},...,u_{n}\in J}{u_{1}<...<u_{n}}}{\int \cdots \int} d \left(\frac{(X_{T}- X_{0})^{(i_{1})}u_{1}}{T}\right)\ldots d \left(\frac{(X_{T}- X_{0})^{(i_{n})}u_{n}}{T}\right) \\
&=& \prod_{j = 1}^{n} \frac{ (X_{T}^{(i_{j})} - X_{0}^{(i_{j})})}{T^{n}}  \underset{0< u_{1}<...<u_{n} <T}{\int \cdots \int} du_{1}\cdots du_{n}
\end{eqnarray*}
By a simple calculation, it follows that 
\begin{eqnarray*} 
\underset{0< u_{1}<...<u_{n} <T}{\int \cdots \int} du_{1}\cdots du_{n} = \frac{T^{n}}{n!},
\end{eqnarray*} 
and $C_{I}(X)$ can be further simplified to
\begin{eqnarray*}
 \frac{1}{n!}\prod_{j = 1}^{n} (X_{T}^{(i_{j})} - X_{0}^{(i_{j})}).
\end{eqnarray*}
It is convenient to rewrite it in the tensor product form as follows:
\begin{eqnarray*}
S(X) = \exp( X_{T} - X_{0}) , 
\end{eqnarray*}
where for any $a \in T((E))$, 
\begin{eqnarray*}
\exp(a) := \sum_{n \geq 0} \frac{1}{n!}a^{\otimes n}.
\end{eqnarray*}
\end{example}
The concept of a rough path is a generalization of the signature of a path and it is defined in a way to capture the analytic and algebraic properties of the signature of path. A geometric $p$-rough path is a $p$-rough path which can be represented as a limit of $1$-rough path in $p$-variation path topology, and the space of geometric $p$-rough paths is denoted by  $G\Omega_{p}(E)$.  The definition of rough paths and geometric rough paths can be found in Appendix \ref{AppendixRoughPath}. For simplicity in most of the paper, we use the signature of a path.
\subsection{The origin of signatures}
The signature of a path was first studied by a geometer K.T Chen (\cite{chen1957integration}, \cite{chen1977iterated}). Iterated integrals of continuous smooth paths naturally arise when applying Picard iteration to solve controlled ordinary differential equations in form of (\ref{controlledEquation}). More specifically, let us consider the case where the driving path $X \in \mathcal{V}^{1}([0, T], E)$, the procedure of Picard iteration is to define a sequence of $\{Y(n): [0, T] \rightarrow W:=\mathbb{R}^{\tilde{d}}\}_{n \geq 1}$ recursively as follows for every $t \in [0, T]$, 
\begin{eqnarray*}
Y(0)_{t} &\equiv& y_{0};\\
Y(n+1)_{t} &=& y_{0} + \int_{0}^{t} g(Y(n)_{s}) dX_{s}.
\end{eqnarray*} 
By the fixed point argument, $Y(n)$ converges in 1-variation norm and the limit of $Y(n)$ is a solution to (\ref{controlledEquation}) under some appropriate condition. In Example \ref{LinearDiffEqnexample} of linear differential equations, by the $n^{th}$ Picard iteration $Y_{n}$ is simply a linear functional on the signature of the driving path of order $n$, and so is the limit of $Y_{n}$, which motivates to propose the concept of the signature of a path. 
\begin{example}[Linear Controlled Differential Equation]\label{LinearDiffEqnexample} Suppose $X \in \mathcal{V}^{1}([0, T], E)$ and $g:W \rightarrow L(E, W))$ is a bounded \textbf{linear} map. Then it follows that
\begin{eqnarray*}
Y(n)_{t} &=& \left(I + \sum_{k = 1}^{n} g^{\otimes k}  \underset{\underset{u_{1},...,u_{n}\in J}{u_{1}<...<u_{n}}}{\int \cdots \int } dX_{u_{1}}\otimes ...\otimes dX_{u_{n}}\right)y_{0},
\end{eqnarray*}
and $Y(n)_{t}$ converges to $Y_{t}$ as $n$ tends to infinity, and $Y_{t}$ yields the following representation 
\begin{eqnarray*}
 Y_{t} &=&\left (I + \sum_{k = 1}^{\infty} g^{\otimes k}  \underset{\underset{u_{1},...,u_{n}\in J}{u_{1}<...<u_{n}}}{\int \cdots \int } dX_{u_{1}}\otimes ...\otimes dX_{u_{n}}\right)y_{0},
\end{eqnarray*}
where $g^{\otimes k}(x_{1} \otimes \cdots \otimes x_{k}) = g(x_{1}) \cdots g(x_{k})$.
\end{example} 
When $g$ is a smooth vector field, we have an extension of the classical Taylor expansion on the path space.  Let us define $g^{\circ n}: W \rightarrow L(E^{\otimes n}, W)$ inductively by
\begin{eqnarray*}
g^{\circ 1} &=& g;\\
g^{\circ n+1} &=& D(g^{\circ n})g.
\end{eqnarray*}
If one regards the signature of a path as monomials on the paths space, then it is natural to define step-$N$ Taylor expansion for $Y_{t}$ by $\tilde{Y}(N)_{t}$ as follows:
\begin{eqnarray*}
\tilde{Y}(N)_{t} = y_{0} + \sum_{n= 1}^{N} g^{\circ n}(y_{0}) \underset{\underset{u_{1},...,u_{n}\in J}{u_{1}<...<u_{n}}}{\int \cdots \int } dX_{u_{1}}\otimes ...\otimes dX_{u_{n}}.
\end{eqnarray*}
It is noted that $\tilde{Y}(N)_{t}$ is linear in the truncated signature of $X$ of order $N$. Moreover the error bounds of $\tilde{Y}(N)_{t}$ to approximate $Y_{t}$ yields a factorial decay in terms of $N$, which is summarised in the following theorem (\cite{friz2008euler}).  
\begin{theorem}\label{FactorialDecayTheorem}
Let $Y: [0, T] \rightarrow W$ be a continuous path and satisfy (\ref{controlledEquation}), where $X: [0, T] \rightarrow E$ is a path of finite 1-variation and $g: W \rightarrow L(E, W)$ is a smooth vector field. Then for any integer $N >0$, 
\begin{eqnarray*}
\vert Y_{t} - \tilde{Y}(N)_{t}\vert \leq \vert \vert g^{\circ N} \vert \vert_{\infty} \vert\vert Dg \vert\vert_{\infty} \frac{\vert X \vert_{1, [0, t]}^{N+1}}{N!}.
\end{eqnarray*} 
\end{theorem}
This result has been extended to the case where the driving path is any $p$-geometric rough path, and $g$ is a $Lip(\gamma)$ vector field where $\gamma > p-1$ (Theorem 1, \cite{boedihardjo2015uniform}). 
\subsection{The properties of signatures}
In this subsection we summarize the main properties of signatures, which are important for statistical inference. The first property is Chen's identity (Theorem \ref{ChenIdentity}), which asserts the multiplicative property of the signature of a path.
\begin{definition}
Let $X:[0,s]\longrightarrow \emph{E}$ and $Y:[s,t]\longrightarrow \emph{E}$
be two continuous paths. Their concatenation is the path $X\ast Y$ defined
by
\begin{equation*}
(X\ast Y)_{u}=\left\{
\begin{array}{ll}
X_{u}, & u\in \left[ 0,s\right]; \\
X_{s}+Y_{u}-Y_{s}, & u\in \left[ s,t\right],%
\end{array}%
\right.
\end{equation*}
where $0 \leq s \leq t$.
\end{definition}
\begin{theorem}[Chen's Identity]\label{ChenIdentity}
Fix $p \in [1, 2)$ and $ 0 \leq s \leq t$. Let $X \in \mathcal{V}^{p}([0,s], \emph{E})$ and $Y \in \mathcal{V}^{p}([s, t], \emph{E})$. Then it holds that
\begin{equation}
S(X\ast Y)=S(X)\otimes S(Y).
\end{equation}
\end{theorem}
The proof can be found in \cite{RoughPaths}.
\begin{example}
Let $X$ be a $E$-valued piecewise linear path, i.e. $X$ is the concatenation of a finite number of linear paths, and in other words there exists a positive integer $l$ and linear paths $X_{1}, X_{2}, \dots, X_{l}$ such that
\begin{eqnarray*}
X = X_{1} * X_{2} * \dots * X_{l}.
\end{eqnarray*}
Recall the analytic formula for the signature of a linear path (see Example \ref{EgSigOfLinearPath}). Chen's identity provides a method to compute the signature of a piecewise linear path, as 
\begin{eqnarray*}
S(X) = \overset{l}{\underset{i = 1}{\bigotimes}} \exp(X_{i}).
\end{eqnarray*}

\end{example}
The second property of the signature is invariance under time re-parameterizations (Lemma 1.6, \cite{RoughPaths}).
\begin{lemma}\label{SigTimeInvariance}
Let $X \in \mathcal{V}^{1}( [0, T], E)$ and $\lambda: [0, T] \rightarrow [T_{1}, T_{2}]$ be a non-decreasing surjection and define $X_{t}^{\lambda} := X_{\lambda_{t}}$ for the reparamterization of $X$ under $\lambda$. Then for every $s, t \in [0, T]$,
\begin{eqnarray*}
S(X)_{\lambda_{s}, \lambda_{t}} = S(X^{\lambda})_{s,t}.
\end{eqnarray*}
\end{lemma}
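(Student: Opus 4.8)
The plan is to prove the identity level by level in the tensor algebra: since $S(X)_{a,b} = (1, X^{1}_{a,b}, X^{2}_{a,b}, \dots)$, it suffices to show for each $n \geq 0$ that the $n$-th iterated integral satisfies $(X^{\lambda})^{n}_{s,t} = X^{n}_{\lambda_{s}, \lambda_{t}}$. I would proceed by induction on $n$, exploiting the recursive description of the iterated integrals, namely $X^{n}_{a,b} = \int_{a}^{b} X^{n-1}_{a,u} \otimes dX_{u}$ with $X^{0} \equiv 1$. The base case $n = 0$ is trivial since both sides equal $1$, and the case $n = 1$ is just $X^{\lambda}_{t} - X^{\lambda}_{s} = X_{\lambda_{t}} - X_{\lambda_{s}}$, which is immediate from the definition $X^{\lambda}_{t} = X_{\lambda_{t}}$.

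For the inductive step, assume $(X^{\lambda})^{n-1}_{s,v} = X^{n-1}_{\lambda_{s}, \lambda_{v}}$ for all $v$. Substituting into the recursion gives
\[
(X^{\lambda})^{n}_{s,t} = \int_{s}^{t} (X^{\lambda})^{n-1}_{s,v} \otimes dX^{\lambda}_{v} = \int_{s}^{t} X^{n-1}_{\lambda_{s}, \lambda_{v}} \otimes dX_{\lambda_{v}}.
\]
The crux is then a change of variables $u = \lambda_{v}$ for Young (equivalently, since $X$ has finite $1$-variation, Riemann--Stieltjes) integrals, which should yield
\[
\int_{s}^{t} X^{n-1}_{\lambda_{s}, \lambda_{v}} \otimes dX_{\lambda_{v}} = \int_{\lambda_{s}}^{\lambda_{t}} X^{n-1}_{\lambda_{s}, u} \otimes dX_{u} = X^{n}_{\lambda_{s}, \lambda_{t}},
\]
completing the induction.

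The main obstacle is justifying this change of variables carefully, because $\lambda$ is only assumed to be a non-decreasing surjection, not a strictly increasing bijection: $\lambda$ may be constant on whole subintervals of $[0,T]$. I would handle this at the level of Riemann--Stieltjes sums. Given a partition $s = v_{0} < \dots < v_{N} = t$, its image $\lambda_{s} = \lambda_{v_{0}} \leq \dots \leq \lambda_{v_{N}} = \lambda_{t}$ is a (possibly degenerate) partition of $[\lambda_{s}, \lambda_{t}]$; on any subinterval $[v_{k-1}, v_{k}]$ where $\lambda$ is constant the integrator increment $X_{\lambda_{v_{k}}} - X_{\lambda_{v_{k-1}}}$ vanishes, so collapsing repeated points changes neither sum. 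Thus the approximating sum for the left integral equals that for the right integral over the image partition. Since $\lambda$ is continuous (being a non-decreasing surjection onto an interval) it is uniformly continuous, so the mesh of the image partition tends to $0$ together with the mesh of the original partition; passing to the limit and invoking the fact that for finite $1$-variation paths the Young integral is the limit of these Stieltjes sums gives the claimed equality. A cleaner but equivalent route avoids the induction entirely by running the same image-partition argument directly on the full iterated integral over the simplex $\{s < v_{1} < \dots < v_{n} < t\}$, which $\lambda$ maps onto $\{\lambda_{s} \leq u_{1} \leq \dots \leq u_{n} \leq \lambda_{t}\}$, the degenerate diagonal contributing nothing.
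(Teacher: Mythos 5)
Your proof is correct, and in fact it is more than the paper offers: the paper states this lemma without proof, simply citing the rough paths literature (\cite{RoughPaths}), so there is no internal argument to compare against. Your route --- proving $(X^{\lambda})^{n}_{s,t} = X^{n}_{\lambda_{s},\lambda_{t}}$ level by level via the recursion $X^{n}_{a,b} = \int_{a}^{b} X^{n-1}_{a,u}\otimes dX_{u}$ and a change of variables for Riemann--Stieltjes integrals --- is the standard textbook argument, and you correctly identified and resolved the only delicate point: $\lambda$ need not be injective, so the substitution $u=\lambda_{v}$ cannot be justified by an inverse map. Your fix (image partitions of $[s,t]$ give possibly degenerate partitions of $[\lambda_{s},\lambda_{t}]$; terms on plateaus of $\lambda$ carry zero integrator increment, so collapsing repeated points leaves the sums unchanged; continuity of $\lambda$, which indeed follows from monotonicity plus surjectivity onto an interval, forces the image mesh to zero) is exactly right. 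Two small points you leave implicit and could state for completeness: first, $X^{\lambda}=X\circ\lambda$ is itself continuous and of bounded variation (its $1$-variation is at most that of $X$, by the same image-partition observation), which is needed so that $S(X^{\lambda})$ and the left-hand integrals exist; second, in the alternative simplex argument, the claim that the degenerate diagonal contributes nothing should be justified by noting that $X$ is continuous, hence its Stieltjes measure is atomless and the diagonal is a null set for the product measure. Neither gap is serious; the induction version of your argument is complete as written.
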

Another important property of the signature is the uniqueness of the signature of a path. It was first proved for any continuous path of finite 1-variation (\cite{UniquenessOfSignature}), which has been extended to the case of a geometric $p$-rough path for some $p >1$ (\cite{boedihardjo2014signature}).
\begin{theorem}[Uniqueness of signature]\label{UniquenessSigThm}
Let $X \in \mathcal{V}^{1}( [0, T], E)$. Then $S(X)$ determines $X$ up to the tree-like equivalence. (\cite{UniquenessOfSignature})
\end{theorem}
Heuristically,  a tree-like path can be thought as being a null-path as a control; the trajectories are completely canceled out by themselves. The precise definition of tree-like equivalence can be found in \cite{UniquenessOfSignature}. The following lemma is a useful sufficient condition for the uniqueness of the signature.
\begin{lemma}
Let $X \in \mathcal{V}^{1}( [0, T], E)$ with a fixed starting point and at least one coordinate of $X$ is a monotone function. Then $S(X)$ determines $X$ uniquely.
\end{lemma}
Furthermore, a sufficient condition under which a general class of multi-dimensional stochastic processes over a fixed time interval is determined by their signatures up to re-parameterization with probability one has been established in \cite{boedihardjo2014uniqueness} and it is an extension of the original work on the Brownian path (\cite{le2013stratonovich}). For simplicity we state the result only for the Stratonovich signatures of Brownian motion.
\begin{theorem}[Uniqueness of signature of Brownian motion]\label{UniquessBMThm}
Let $B$ denote a standard $d$-dimensional Brownian motion in $E$ and $S(B_{[0, T]})$ denote the Stratonovich signatures of $B$ up to time $T$, where $T >0$.  Then all Brownian motion sample paths up to time $T$ are determined by their signature $S(B_{[0, T]})$ up to time re-parameterization almost surely.
\end{theorem}
\subsection{Linear forms on signatures}
Recall that $\{e_{i}\}_{i = 1}^{d}$ is a basis of $E$. Then let $(e_{i}^{*})_{i=1}^{d}$ be a basis of the corresponding dual space $E^{*}$. Thus for every $n \in \mathbb{N}$, $(e_{i_{1}}^{*} \otimes \cdots \otimes  e_{i_{n}}^{*})$ can be naturally extended to $(E^{*})^{\otimes n}$ by identifying the basis $\left( e_{I}^{*} = e_{i_{1}}^{*} \otimes \dots \otimes e_{i_{n}}^{*} \right)$ as
\begin{eqnarray*}
\langle e_{i_{1}}^{*}\otimes \dots \otimes e_{i_{n}}^{*}, e_{j_{1}}\otimes \dots \otimes e_{j_{n}}\rangle = \delta_{i_{1}, j_{1}}\dots \delta_{i_{n}, j_{n}}.
\end{eqnarray*}
The linear action of $(E^{*})^{\otimes n}$ on $E^{\otimes n}$ extends naturally to a linear mapping $(E^{*})^{\otimes n} \rightarrow T((E))^{*}$ defined by
\begin{eqnarray*}
e_{I}^{*}(\mathbf{a})= e_{I}^{*}(a_{n}),
\end{eqnarray*}
for every $I = (i_{1}, \dots, i_{n}) \in A^{*}$. \\
The linear forms $\{e_{I}^{*}\}_{I \in A^{*}}$ form a basis of $T(E^{*})$. Let $T((E))^{*}$ denote the space of linear forms on $T((E))$ induced by $T(E^{*})$. Let $S(\mathcal{V}^{p}([0, T], E))$ denote the range of the signature of  a path $X \in \mathcal{V}^{p}( [0, T], E)$. For any index $I \in A^{*}$, define $\pi^{I}$ as $e_{I}^{*}$ restricting the domain $S(\mathcal{V}^{p}([0, T], E))$, in formula
\begin{eqnarray*}
\pi^{I}(S(X)) = e_{I}^{*}(S(X)).
\end{eqnarray*}
For any $I, J \in A^{*}$, the pointwise product of two linear forms $\pi^{I}$ and $\pi^{J}$ as real valued functions is a quadratic form on $S(\mathcal{V}^{p}([0, T], E))$, but it is remarkable that it is still a linear form (Theorem \ref{shuffle_theorem}). Let us introduce the definition of the shuffle product.
\begin{definition}
We define the set $S_{m,n}$ of $(m, n)$ shuffles to be the subset of permutation in the symmetric group $S_{m+n}$ defined by
\begin{eqnarray*}
S_{m,n} = \{ \sigma \in S_{m+n}: \sigma(1) < \dots < \sigma(m), \sigma(m+1) < \dots < \sigma(m+n)\}.
\end{eqnarray*}
\end{definition}
\begin{definition}
The shuffle product of $\pi^{I}$ and $\pi^{J}$ denoted by $\pi^{I} \shuffle \pi^{J}$ is defined as follows:
\begin{eqnarray*}
\pi^{I} \shuffle \pi^{J}  = \sum_{\sigma \in S_{m, n}} \pi^{(k_{\sigma^{-1}(1)}, \dots, k_{\sigma^{-1}(m+n)})},
\end{eqnarray*}
where $I = (i_{1}, i_{2}, \cdots, i_{n}), J =( j_{1},j_{2},\cdots, j_{m})$ and $(k_{1}, \dots, k_{m+n}) =( i_{1}, \cdots, i_{n}, j_{1},$\\
$\cdots, j_{m})$.
\end{definition}
\begin{theorem}[Shuffle Product Property]\label{shuffle_theorem}
Let $p \in [1, 2)$. For any $I_{1}, I_{2} \in A^{*}$, it holds that for any path $X \in \mathcal{V}^{p}(J, E)$,
\begin{eqnarray*}
\pi^{I_{1}}( S(X)) \pi^{I_{2}}(S(X)) = (\pi^{I_{1}} \shuffle \pi^{I_{2}})(S(X)).
\end{eqnarray*} 
\end{theorem}
For intuitive explanation of key properties of signatures related to statistical inference we refer to Section 2.2 in \cite{Greg}.
\subsection{Expected signature of stochastic processes}
\begin{definition}
Given a probability space $(\Omega, P, \mathcal{F})$, $X$ is a $E$-valued stochastic process. Suppose that for every $\omega \in \Omega$, the signature of $X(\omega)$ denoted by $\mathbf{X}(\omega)$ \emph{(}or $S(X(\omega))$\emph{)} is well defined a.s and under the probability measure $P$, its expectation denoted by $\mathbb{E}[\mathbf{X}(\omega)]$ is finite. We call $\mathbb{E}[\mathbf{X}(\omega)]$ the expected signature of $X$.
\end{definition}
As the signature of a path can be thought as non-commutative monomials on the path space, the expected signature of a random path plays a similar role as that the moment generating function of a random variable does. It was firstly proved that the law of the signature with compact support is determined by its expected signature in \cite{Fawcett}. Recently this result has been strongly extended in \cite{chevyrev2013set}, by introducing a characteristic function of a random signature $\mathbf{X}$ defined by $\Phi_{\mathbf{X}}(M) = \mathbb{E}[M(\mathbf{X})]$, where $M$ is a unitary representation arising from a linear map  $M$ mapping $E$ into a unitary Lie algebra as an analogy of $\lambda$ in the characteristic function of a random variable $\lambda \mapsto \mathbb{E}[e^{\lambda X}]$ . The main result of \cite{chevyrev2013set} asserts that $\Phi_{\mathbf{X}}$ determines the measure on $\mathbf{X}$.  Here we give a sufficient condition for that the expected signature determines the law on signatures in Theorem \ref{theoremES}. Recall that $G\Omega_{p}(E)$ is the space of $p$-geometric rough paths.
\begin{theorem} [Proposition 6.1, \cite{chevyrev2013set}]\label{theoremES}
Let $\mathbf{X}$ and $\hat{\mathbf{X}}$ be two random variables taking values in $G\Omega_{p}(E)$ for some $p \geq 1$ such that $\mathbb{E}[\mathbf{X}] = \mathbb{E}[\hat{\mathbf{X}}]$ and $\mathbb{E}[\mathbf{X}]$ has an infinite convergence radius. Then $\mathbf{X} \overset{\mathcal{D}}{=} \hat{\mathbf{X}}$.
\end{theorem}
\section{A general expected signature framework}\label{SignatureApproachSection}
\subsection{The expected signature model}\label{GeneralExpectedSignature}
In this subsection, we model the effects of data streams through the regression framework and set up a general regression problem on data streams. Let $V$ and $W$ be two Banach spaces and $J$ be a compact time interval. A $E$-valued data stream in time interval $J$ can be described as a function $X: \mathcal{D} \rightarrow E$, where $\mathcal{D}$ is the set of event times and $\mathcal{D} \subset J$. To cope with different time stamps, we embed $X$ to a function mapping from $J$ to $E$. One way to do this is to extend it to be piecewise linear, but we will see later that there are more intelligent approaches. We therefore assume the embedded function is continuous, even piecewise smooth on a very fine scale, but potentially very oscillatory and barely accessible at this scale. As we will explain later, the fine structure gives the meaning to iterated integrals, and differential equations etc. But it is not a convenient or efficient analytic framework to measure the effects of streams. The theory of rough paths uses a form of $p$-variation to complete this space allowing integrals, differential equations to be well defined on the completion. Recall that $\mathcal{V}^{p}(J, E)$ is the space of continuous functions mapping from $J$ to $E$ with finite $p$-variation. In the following discussion, we consider an element in $\mathcal{V}^{p}(J, E)$ as a data stream (or a path).  

We can formulate the effects of data streams as a dependent variable of a regression problem with an explanatory variable being a path in $\mathcal{V}^{p}(J, E)$. Potentially the dependent variable is a stream as well. Suppose  that we have observations of the input-output pair $\{ X_{i}, Y_{i} \}_{i =1}^{N}$, and the observations are assumed to satisfy 
\begin{eqnarray}\label{RegressionEquation}
Y_{i} = f(X_{i}) + \varepsilon_{i}, \forall i \in 1, \dots, n;
\end{eqnarray}
where $X_{i} \in \mathcal{V}^{p}(J, E)$, $Y_{i} \in \mathcal{V}^{p}([0,T], W) $, $\mathbb{E}[\varepsilon_{i} \vert  X_{i}] = 0$ and $f$ is a unknown function on the path space. In this paper, we aim to provide some new insights to answer the question about how to estimate the function $f$ accurately and effectively in this setting. As before we denote the signatures of $X$ and $Y$ by $\mathbf{X}$ and $\mathbf{Y}$ respectively.  Following classical techniques of non-parametric regression for the finite dimensional case, it is crucial to identify specific feature sets of the observed input (output) data to $\mathbf{linearize}$ the functional relationship between them. In this paper we propose to apply linear regression on \emph{the signature feature} of a path to solve this problem and we will demonstrate the advantages of this approach to tackle the general regression problem on the path space.  

\begin{theorem}[Signature Approximation]\label{SignatureApproximationThm}
Suppose $f: S_{1} \rightarrow \mathbb{R} $ is a continuous function where $S_{1}$ is a compact subset of $S(\mathcal{V}^{p}(J, E))$. Then for every $\varepsilon > 0$, there exists a linear functional $L \in T((E))^{*}$ such that for every $ a \in  S_{1}$, 
\begin{eqnarray*}
\vert f(a) - L(a)\vert  \leq \varepsilon.
\end{eqnarray*}
\end{theorem}
\begin{proof}
Let $\mathcal{L}(S_{1})$ denote a family of all linear functions in $T((E))^{*}$ restricted to $S_{1}$. By the shuffle product property of signatures (Theorem \ref{shuffle_theorem}), $\mathcal{L}(S_{1})$ is an algebra. Since the $0^{th}$ term of the signature is always 1, this algebra contains constant functions. Moreover it separates the points. (Details can be found in the proof of Corollary 2.16 in \cite{RoughPaths}.) By Stone-Weierstrass theorem, $\mathcal{L}(S_{1})$ is dense in the space of continuous functions on $S_{1}$. 
\end{proof}
\begin{remark}
A special case for Theorem \ref{SignatureApproximationThm} is that $S_{1}$ is the set of signatures of any finite number of sample paths.
\end{remark}
The goal is to learn the conditional distribution of $Y$ given the information $X$, which is $\mathbb{E}[\mathbf{Y}\vert \mathbf{X}]$ if written in the language of rough paths. The underlying reasons are two-fold:
\begin{enumerate}
\item The signature of a path of bounded variation uniquely determines the path up to the tree-like equivalence (Theorem \ref{UniquenessSigThm});
\item Under certain regularity condition, the expected signature of stochastic process determines the measure on the random signatures according to Theorem \ref{theoremES}.
\end{enumerate}
When $\mathbb{E}[\mathbf{Y}\vert \mathbf{X}]$ is a continuous function of $\mathbf{X}$, according to Theorem \ref{SignatureApproximationThm}, $\mathbb{E}[\mathbf{Y}\vert \mathbf{X}]$ can be well approximated by a linear function on $\mathbf{X}$ locally. By adding a perturbation of noise, it is natural to come up with the following expected signature model: 
\begin{definition}[Expected Signature Model]
Let $X$ and $Y$ be two stochastic processes taking values in $E$ and $W$ respectively. Suppose that the signatures of $X$ and $Y$ denoted by $\mathbf{X}$ and $\mathbf{Y}$ are well defined a.s. Assume that
\begin{eqnarray*}
\mathbf{Y} = L(\mathbf{X}) + \varepsilon,
\end{eqnarray*}
where $\mathbb{E}[\varepsilon \vert \mathbf{X}] = 0$ and $L$ is a linear functional mapping $T((E))$ to $T((W))$.
\end{definition}

\begin{notation}
Let $\mu_{\mathbf{X}}$ and $\Sigma^{2}_{\mathbf{X}}$ denote the conditional expectation $\mathbb{E}[\mathbf{Y}\vert \mathbf{X}]$ and covariance of  $\mathbf{Y}$  conditional on $\mathbf{X}$ respectively:
\begin{eqnarray}
\Sigma^{2}_{\mathbf{X}}: A^{*} \times A^{*} &\rightarrow&  \mathbb{R} ;\nonumber \\
(I,  J) &\mapsto& \mathrm{Cov}(\pi^{I}(\mathbf{Y}) , \pi^{J}(\mathbf{Y}) \vert \mathbf{X}),
\end{eqnarray}
where $I, J \in A^{*}$.
\end{notation}
The following lemma asserts that $\Sigma^{2}_{\mathbf{X}}$ is determined by $\mu_{\mathbf{X}}$.
\begin{lemma}\label{MeanDeterminesVar}
Let $\mu_{\mathbf{X}}$ and $\Sigma^{2}_{\mathbf{X}}$ be defined as before. Then for every $I, J \in A^{*}$,
\begin{eqnarray*}
\Sigma^{2}_{\mathbf{X}}(I, J) = (\pi^{I} \shuffle \pi^{J})(\mu_{\mathbf{X}}) - \pi^{I}(\mu_{\mathbf{X}})\pi^{J}(\mu_{\mathbf{X}}).
\end{eqnarray*}
\end{lemma}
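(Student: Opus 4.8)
The plan is to reduce the statement to the algebraic shuffle identity (Theorem \ref{shuffle_theorem}) combined with the linearity of conditional expectation. First I would expand the conditional covariance using its definition,
\begin{eqnarray*}
\Sigma^{2}_{\mathbf{X}}(I, J) = \mathbb{E}[\pi^{I}(\mathbf{Y}) \pi^{J}(\mathbf{Y}) \vert \mathbf{X}] - \mathbb{E}[\pi^{I}(\mathbf{Y}) \vert \mathbf{X}] \, \mathbb{E}[\pi^{J}(\mathbf{Y}) \vert \mathbf{X}],
\end{eqnarray*}
which is just the standard decomposition of a covariance into the conditional expectation of the product minus the product of the conditional expectations.

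The key step is to rewrite the product term. Since $\mathbf{Y} = S(Y)$ is almost surely the signature of an output path of bounded variation, Theorem \ref{shuffle_theorem} applies pathwise to each realization, giving $\pi^{I}(\mathbf{Y}) \pi^{J}(\mathbf{Y}) = (\pi^{I} \shuffle \pi^{J})(\mathbf{Y})$ almost surely. The point that makes this useful is that $\pi^{I} \shuffle \pi^{J}$ is a finite linear combination of the coordinate forms $\pi^{K}$, hence a genuine linear form on $T((W))$ rather than merely a quadratic one.

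Next I would push the conditional expectation through these linear forms. Each $\pi^{K}$ reads off a single coordinate of a fixed tensor degree, so it commutes with conditional expectation, and therefore
\begin{eqnarray*}
\mathbb{E}[(\pi^{I} \shuffle \pi^{J})(\mathbf{Y}) \vert \mathbf{X}] = (\pi^{I} \shuffle \pi^{J})(\mathbb{E}[\mathbf{Y}\vert \mathbf{X}]) = (\pi^{I} \shuffle \pi^{J})(\mu_{\mathbf{X}}),
\end{eqnarray*}
while similarly $\mathbb{E}[\pi^{I}(\mathbf{Y}) \vert \mathbf{X}] = \pi^{I}(\mu_{\mathbf{X}})$ and $\mathbb{E}[\pi^{J}(\mathbf{Y}) \vert \mathbf{X}] = \pi^{J}(\mu_{\mathbf{X}})$. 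Substituting these three identities into the covariance decomposition yields the claimed formula directly.

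The computation is short, so the only point that really needs care is the interchange of the linear forms with the conditional expectation. I would justify it by observing that $\pi^{I}$, $\pi^{J}$, and the finitely many terms composing $\pi^{I} \shuffle \pi^{J}$ are each the evaluation of one fixed-degree tensor coordinate, and by invoking the standing integrability hypothesis that the expected signature is finite, so that all the relevant conditional expectations are well defined and linearity applies coordinate by coordinate. This, rather than any algebra, is where I expect the main (modest) obstacle to lie.
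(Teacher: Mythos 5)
Your proposal is correct and follows exactly the paper's own argument: expand the conditional covariance by definition, apply the shuffle identity (Theorem \ref{shuffle_theorem}) pathwise to convert the product $\pi^{I}(\mathbf{Y})\pi^{J}(\mathbf{Y})$ into the linear form $(\pi^{I}\shuffle\pi^{J})(\mathbf{Y})$, and commute the linear forms with the conditional expectation. Your added care about why the interchange is legitimate (finite linear combinations of fixed-degree coordinate forms plus the integrability of the expected signature) is a point the paper glosses over, but it is the same proof.
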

\begin{proof}
For each $I, J \in A^{*}$, by the definition of the conditional covariance
\begin{eqnarray*}
\Sigma_{\mathbf{X}}^{2}(I, J) = \mathbb{E}[\pi^{I}(\mathbf{Y})\pi^{J}(\mathbf{Y}) \vert \mathbf{X} ]-  \mathbb{E}[\pi^{I}(\mathbf{Y}) \vert \mathbf{X}]\mathbb{E}[\pi^{J}(\mathbf{Y}) \vert \mathbf{X}].
\end{eqnarray*}
Due to the shuffle product property of the signature, we have
\begin{eqnarray*}
\mathbb{E}[\pi^{I}(\mathbf{Y})\pi^{J}(\mathbf{Y}) \vert  \mathbf{X}]= \mathbb{E}[(\pi^{I} \shuffle \pi^{J} )(\mathbf{Y})  \vert  \mathbf{X}] =  (\pi^{I} \shuffle \pi^{J})(\mathbb{E}[\mathbf{Y} \vert \mathbf{X}]),
\end{eqnarray*}
and we obtain the following equations immediately:
\begin{eqnarray*}
\Sigma^{2}_{\mathbf{X}}(I, J) = (\pi^{I} \shuffle \pi^{J})(\mu_{\mathbf{X}}) - \pi^{I}(\mu_{\mathbf{X}})\pi^{J}(\mu_{\mathbf{X}}).
\end{eqnarray*}
\end{proof}
\begin{definition}
Let $\mu \in T((E))$. We say that the quadratic form $\Sigma^{2}: A^{*} \times A^{*} \rightarrow \mathbb{R}$ is induced by $\mu$, if and only if for every $I, J \in A^{*}$
\begin{eqnarray*}
\Sigma^{2}(I, J) = (\pi^{I} \shuffle \pi^{J})(\mu) - \pi^{I}(\mu)\pi^{J}(\mu).
\end{eqnarray*}
\end{definition}
\begin{definition}
Let $\mu  \in T^{2n}((E))$. We say that the quadratic form $\Sigma^{2}: A^{*}_{n} \times A^{*}_{n} \rightarrow \mathbb{R}$, is induced by $\mu$ if and only if for every $I, J \in A^{*}_{n}$,
\begin{eqnarray*}
\Sigma^{2}(I, J) = (\pi^{I} \shuffle \pi^{J})(\mu) - \pi^{I}(\mu)\pi^{J}(\mu),
\end{eqnarray*}
where $A^{*}_{n}$ is the set of words over $A$ of the length no more than $n$ .
\end{definition}

\textbf{Calibration and prediction}: Under the expected signature model, given a large number of samples $\{\mathbf{X}_{i}, \mathbf{Y}_{i}\}_{i = 1}^{N}$,  estimating the expected truncated signature of $Y$ of the order $m$ on condition of $\mathbf{X}$, i.e. $\rho_{m}(\mathbb{E}[\mathbf{Y} \vert \mathbf{X}])$, or in other words, the linear functional $\rho_{m} \circ f$, turns out to be the standard linear regression problem; the coordinate iterated integrals of $\mathbf{Y}$ are multi-dimensional regressands while the coordinate iterated integrals of $\mathbf{X}$ are explanatory variables. In practice, we need to consider the truncated signature $\mathbf{X}$ of certain order instead of the full signature, since the number of explanatory variables should be finite. Since the regression is linear, there are many standard linear regression methods ready to use. We might use regularization or variable selection techniques such as LASSO or SVD to avoid the collinearity of the design matrix and the overfitting issue. We call this calibration method the ES approach, where ES stands for expected signature. 

To measure the goodness of fit for the model, we simply use the mean squared error of the residuals $\{a_{i}\}_{i = 1}^{N}$ as an indicator of the fitting performance, where
\begin{eqnarray*}
a_{i} = \mathbf{Y}_{i}- \hat{f}( \mathbf{X}_{i}), \text{ }\forall i =1, \dots , N.
\end{eqnarray*}
Alternatively we use $R^{2}$ or the adjusted-$R^{2}$ as an indicator of the fitting performance. 
\subsection{Signatures - A new feature set of a path}
In this subsection, we summarize the main reasons for choosing the signature as the feature set of a path for statistical inference and machine learning, and emphasize the situations when the signature-based approach are preferable over other methods.

Theorem \ref{UniquenessSigThm} and Theorem \ref{UniquessBMThm} highlight the relation between a path and its signature in both deterministic and probabilistic settings. Roughly speaking, there is almost one-to-one correspondence between paths and signatures, and thus a smooth function on the path space can be often viewed as another smooth function on the signature space. By Theorem \ref{SignatureApproximationThm}, any smooth function on the signature can be well approximated by a linear functional on the signature up to any arbitrary precision, and the linear forms on signatures indeed serve basis functions to represent any smooth function on signatures locally.

For the case that an output can be modeled as a solution to a controlled differential equation driven by an input path, Theorem \ref{FactorialDecayTheorem} guarantees that there exists a linear functional on truncated signature, which can approximate the output well as an analogy to Taylor's expansion. The corresponding error bound decays factorially fast as the degree of the truncated signature grows. It is also a positive evidence to support the fact that coordinate signatures of lower order carry most informative information in terms of the effects of the path. In the next subsection, the numerical example will show how the signature features facilitates the estimation of the solution to a SDE without knowing the underlying dynamics and it leads to significant dimension reduction.

When time reparametization of an input path does not affect its output, the signature of this path can be used to summarize the path information effectively,  because the signature is invariant up to time re-parameterization (see Lemma \ref{SigTimeInvariance}) .

In practice, rather than a continuous path $X$, we usually observe a sequence of data points at discrete time stamps, but potentially it can be sampled on a very fine time scale. Recording a data stream tick by tick might not be an effective way to summarize the data streams in terms of its effects.  Consider an example of the evidence of insider trading. Which happened first - the trade or the news? Then it may not be enough to summarise the data on a daily, or even one minute basis to establish the answer. The time series structure leads one to consider large amounts of uninformative data for what were really just two events. The signature captures this information in the sign of the first two level of signature however it is initially sampled (see Example \ref{IntroExample}).
\begin{example}\label{IntroExample}
Suppose that we have price data series on a minutely basis and the time of news for one day. Suppose that there are only one change in price and  one news for this day. Let $X_{i}$ denote the price at the $i^{th}$ minute, and $Y_{i}$ is an indicator function of the news defined as follows:
\begin{eqnarray*}
Y_{i} = \begin{cases}0, &\mbox{if } i  < \tau; \\ 
 1, & \mbox{if } i \geq \tau. \end{cases} 
\end{eqnarray*}
where $\tau$ is the arrival time of the news. \\
Figure \ref{NewsThenPrice} and Figure \ref{PriceThenNews} plot the possible shapes of the path $(X_{i}, Y_{i})_{i = 0}^{T}$ when the news happened before or after the change of price respectively. The sign of 
\begin{eqnarray*} 
(\pi^{(1)}(S((X_{i}, Y_{i})_{i = 0}^{T}), (\pi^{(1, 2)} - \pi^{(2, 1)})(S((X_{i}, Y_{i})_{i = 0}^{T}) ) 
\end{eqnarray*} 
are discriminative for those two cases. It is much more effective than keeping all the minutely data especially when we do not know $\tau$ and the time of price change in advance.
\end{example}
\begin{figure}  
\caption{The case for that the news happens earlier than the price move}\label{NewsThenPrice}
  \centering
  \includegraphics[width= 4.0 in]{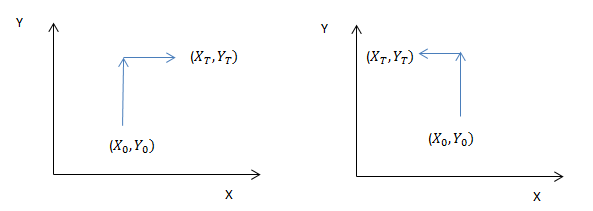}
\end{figure}
\begin{figure}  
\caption{The case for that the news happens later than the price move}\label{PriceThenNews}
  \centering
  \includegraphics[width= 4.0 in]{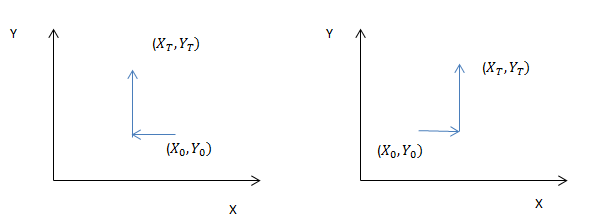}
\end{figure}

As it shows in Example \ref{IntroExample}, the discrete approximation of a data stream via sampling may not be an efficient representation of this stream. Neither is any linear functional of the data streams, such as Fourier series or wavelets. It is important to find effective non-linear feature sets of the data streams for this case. However, the standard non-linear feature sets of discrete time series, such as polynomial basis, which would contain much redundant information and leads to the curse of dimensionality. To the contrast, a discrete time series of such kind can be embedded to a continuous path (see Section \ref{timeSeriesEs}) and the corresponding signature can capture the information in the data stream in a structural way with finite partial descriptions and the leading terms are not particularly sensitive to the sampling rates either (see Section 2.3 in \cite{Greg}). It is important to note that the dimensionality of the truncated signature of a $d$-dimensional path of order $n$ is $\frac{d^{n+1} - 1}{d - 1}$, and it does not depend on the frequency of time sampling. Using the signature features can avoid the curse of dimensionality resulted from the high sampling frequency. For the discussion on sub-sampling continuous data streams using rough paths, we refer readers to \cite{flint2013discretely}.

\subsection{An illustrative example: a diffusion process}
There is a simple but illustrative example - learning $Y_{T}$  (a solution to the SDE driven by multi-dimensional Brownian motion with polynomial vector fields)  as a function of the signature of the driving Brownian path. More specifically, suppose $Y_{t}$ satisfies the following SDE:
\begin{eqnarray*}
dY_{t} = a(1-Y_{t})dX_{t}^{(1)} + bY_{t}^{2}dX_{t}^{(2)}, Y_{0} = 0.
\end{eqnarray*}
where $X_{t} = (X_{t}^{(1)}, X_{t}^{(2)}) = (t, W_{t})$, and the integral is in the Stratonovich sense, and $T = 0.25$, and $(a, b)$ is chosen to $(1, 2)$. This example is borrowed from \cite{papavasiliou2011parameter}, which studied the estimation problem for the parameters $a$ and $b$ of this parametric SDE via the expected signature matching method. However in our paper,  we are interested in non-parametric estimation for the functional relationship $f$ between $X_{[0, T]}$ and $Y_{T}$, where $f$ maps $X_{[0, T]}$ to the corresponding pathwise solution $Y_{T}$ to SDE driven by $X_{[0, T]}$, in formula 
\begin{eqnarray*}
 f(X_{[0, T]}) = Y_{T} .
\end{eqnarray*}
It is noted that  $Y_{T} =  \mathbb{E}[Y_{T} \vert X_{[0, T]}] $, which is the first term of the conditional expected signature of $Y$ given the information $X_{[0, T]}$.

We generate 1600 realizations of discretely sampled input path $\hat{X}_{K}:= \left( X_{\frac{iT}{K}} \right)_{i = 0}^{K}$ and the corresponding approximate solution using Milstein method with the number of discretization
steps $K$.  Half of samples are used for training and the rest of samples are used for backtesting. We apply our method to estimate $f$ and benchmark with linear regression on the increment features of $\hat{X}_{K}$, i.e. $\left( X_{\frac{(i+1)T}{K}} - X_{\frac{iT}{K}} \right)_{i = 0}^{K-1}$. 

For $K = 250$, we plot the estimated output for the testing set obtained by linear regression w/t regularization on increment feature set in Figure \ref{LinearRegressionIncrements} and plot the estimated output obtained by linear regression on the truncated signature of different order in Figure \ref{T025Deg2}, Figure \ref{T025Deg4} and Figure \ref{T025Deg6}. The corresponding summary of $R^{2}$ is given in Table \ref{R2_degree_sig}. It shows that increasing the degree of truncated signatures improves the fitting performance. We obtain better forecast results by using the step-4 signature feature of dimension $31$ than the increment feature of dimension 250. The truncated signature of order $6$ gives us almost perfect prediction for $Y_{T}$ in terms of $R^{2}$, and it is a more efficient summary of $X_{[0, T]}$ than the increment feature, which leads to significant dimension reduction.

It is noted that the dimension of the truncated signature of a two-dimensional path $X$ of order $n$ is $2^{n+1} - 1$, independent with the number of sampling time points $K$, while the increment feature of that have the dimensionality $K$.  Figure \ref{RegNoOfTimesteps}  shows that when $K$ is getting larger and larger, the use of increment feature $\hat{X}_{K}$ results in more severe overfitting issue; combing the increment features with the regularization method, like Lasso and cross validation, can help to avoid this problem, but it still underperforms our approach in terms of $R^{2}$ in Table \ref{R2_K} in this SDE example.
\begin{table}\caption{$R^{2}$ summary}\label{R2_degree_sig}
\center
\begin{tabular}{|c|c|c|c|}
\hline
\multicolumn{4}{|c|}{Signature Feature sets} \\
  \hline  \cline{1-2} \cline{3-4} 
  Deg of signature & 2 & 4  &6 \\
  \hline
 R2(backtesting set) &0.9562 & 0.9997 & $\mathbf{1.000}$   \\
  \hline
  \hline
  \multicolumn{4}{|c|}{Increment Feature sets} \\
  \hline 
  &OLS &  \multicolumn{2}{c|}{Lasso + CV}\\
  \hline
     R2(backtesting set) & 0.9673& \multicolumn{2}{c|}{0.9717}\\
     \hline
\end{tabular}
\end{table}

\begin{figure}[ht]
\centering
\subfigure[Linear regression on increments of Brownian paths.]{
 \includegraphics[width= 2.2 in]{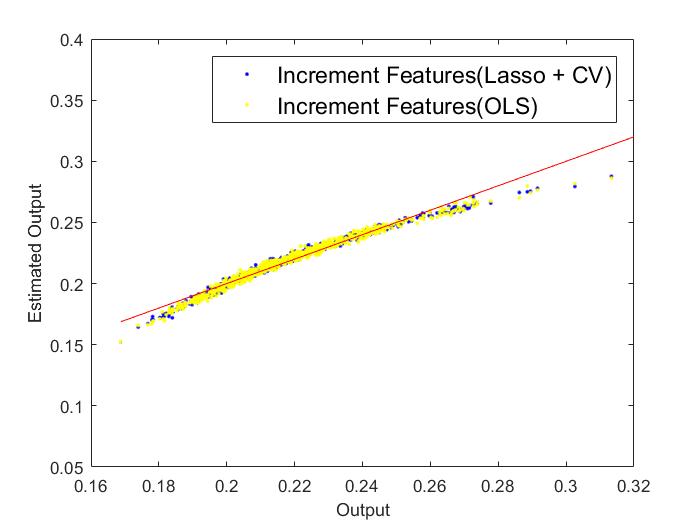}
    \label{LinearRegressionIncrements}
}
\subfigure[Regression on the truncated signature of order $2$.]{
 \includegraphics[width= 2.2 in]{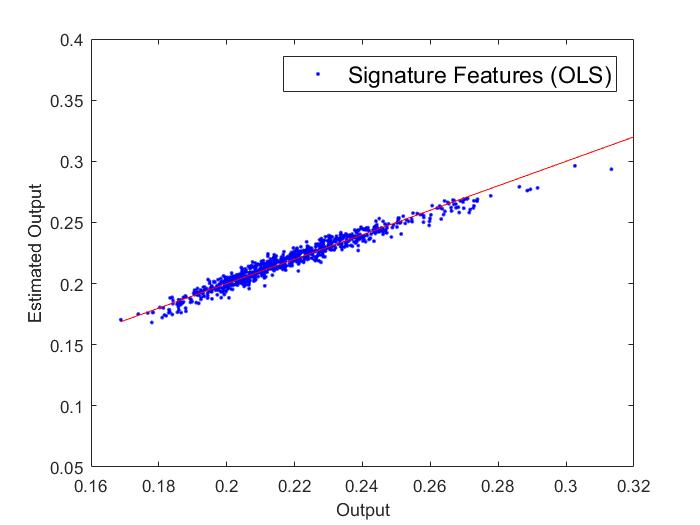}
    \label{T025Deg2}
}
\subfigure[Linear regression on the truncated signature of order $4$.]{
 \includegraphics[width= 2.2 in]{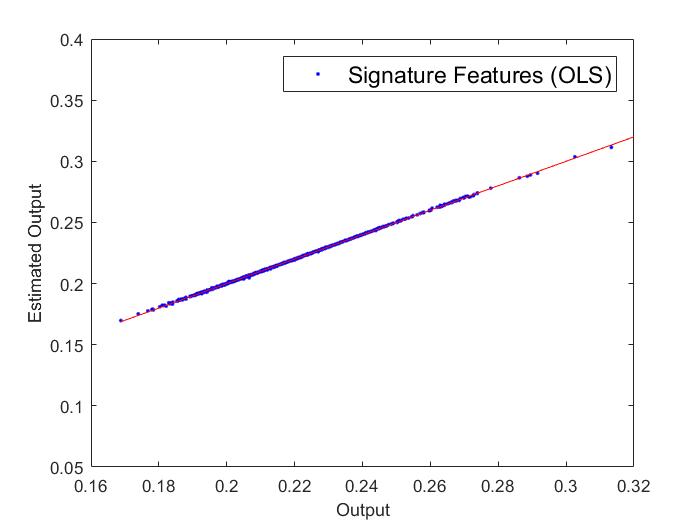}
    \label{T025Deg4}
}
\subfigure[Linear regression on the truncated signature of order $6$.]{
 \includegraphics[width= 2.2 in]{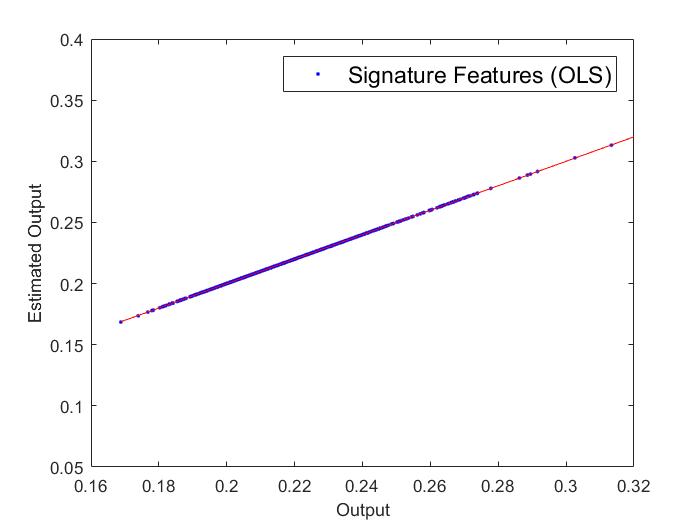}
    \label{T025Deg6}
}
\caption[Optional caption for list of figures]{$K = 250$. In Subfigure \subref{LinearRegressionIncrements}, we plot the estimated output against the actual output via linear regression w/t regularization represented by yellow/blue dots respectively. In Subfigures \subref{T025Deg2}, \subref{T025Deg4} and \subref{T025Deg6} we plot the fitting results obtained via linear regression on the truncated signatures of order $2, 4$ and $6$ respectively.}
\label{FigureRegDegOfSig}
\end{figure}

\begin{figure}[ht]
\centering
\subfigure[Linear regression on increments of Brownian paths. $K = 750$.]{
 \includegraphics[width= 2.2 in]{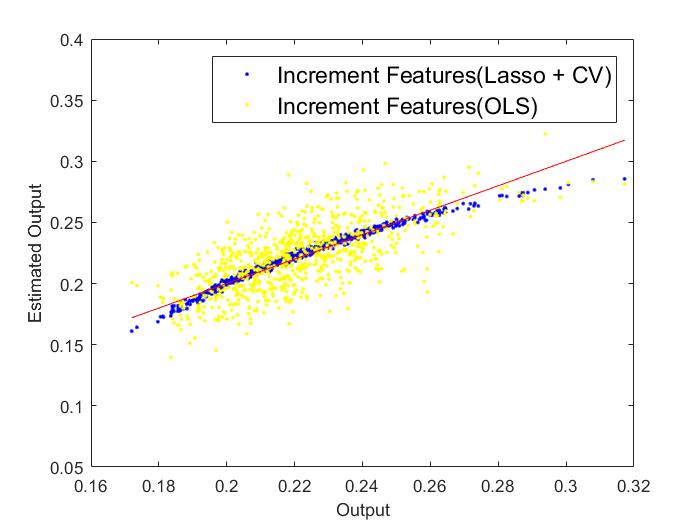}
    \label{timestep750increments}
}
\subfigure[Linear regression on the truncated signature of order $4$. $K = 750$.]{
 \includegraphics[width= 2.2 in]{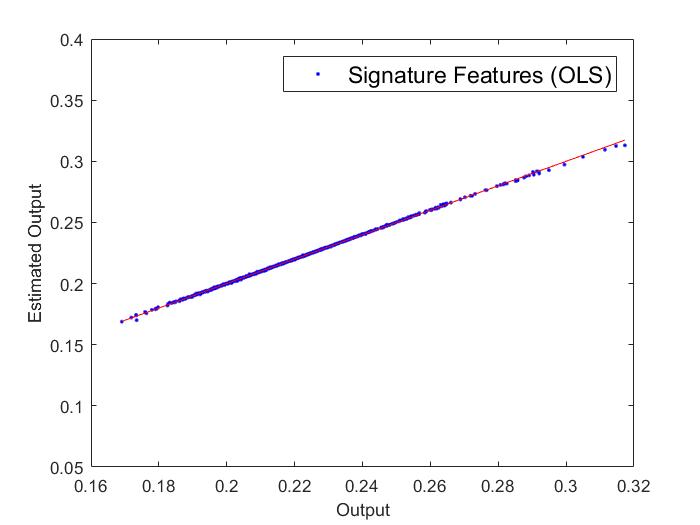}
    \label{timestep750sig}
}
\subfigure[Linear regression on increments of Brownian Paths. $K = 1000$.]{
 \includegraphics[width= 2.2 in]{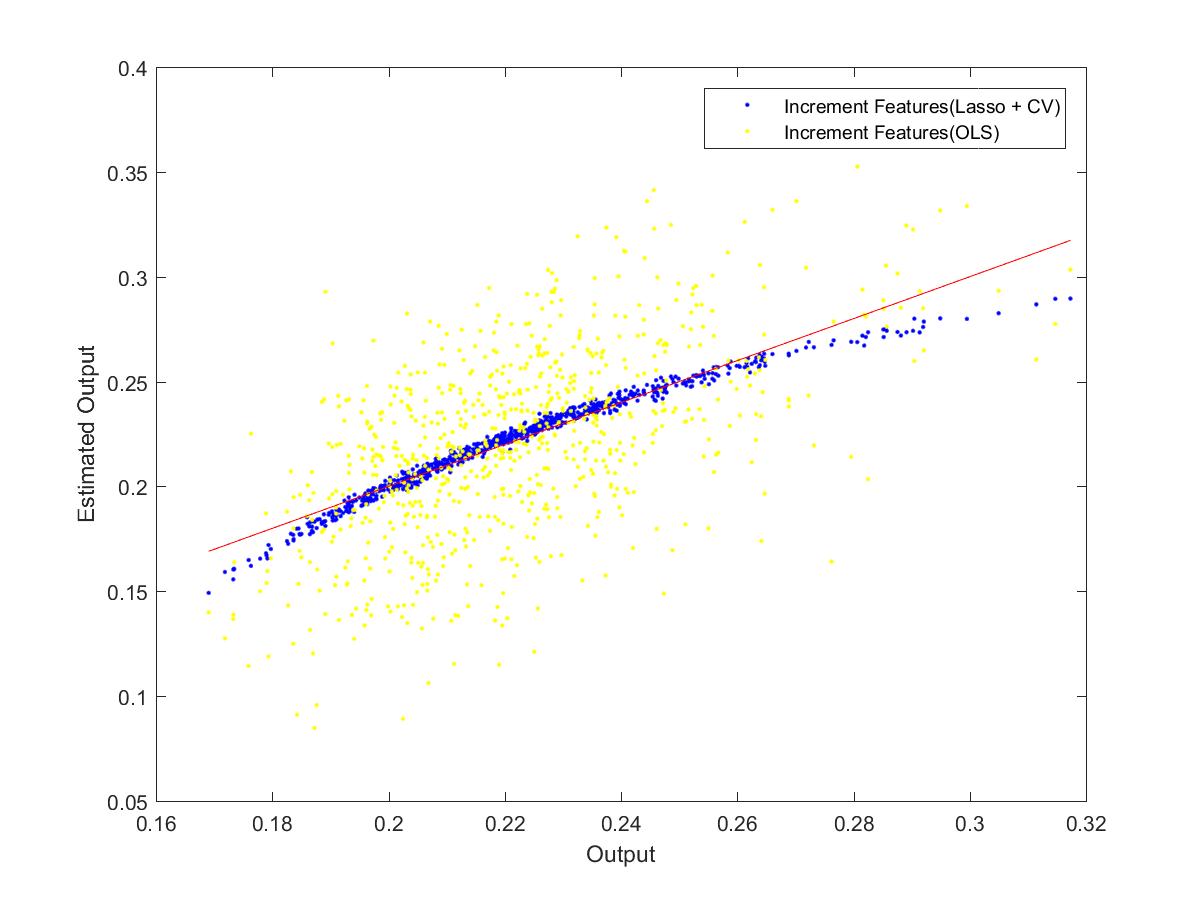}
    \label{timestep1000increments}
}
\subfigure[Regression on the truncated signature of order $4$. $K = 1000$.]{
 \includegraphics[width= 2.2 in]{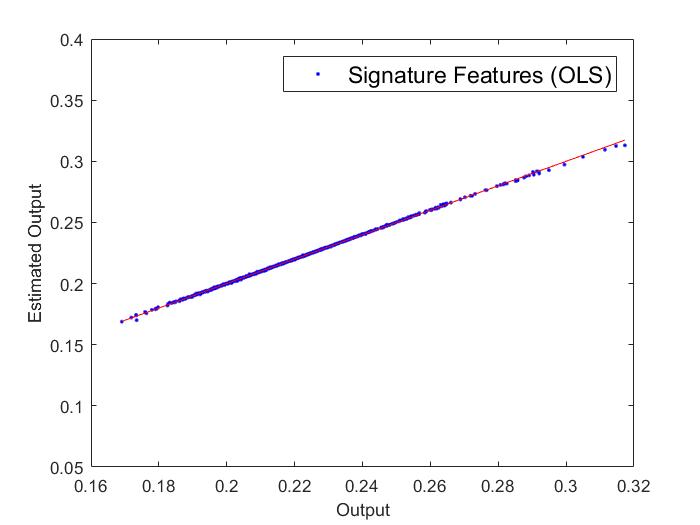}
    \label{timestep1000sig}
}
\caption[Optional caption for list of figures]{On the upper panel of figures \subref{timestep750increments} and \subref{timestep750sig}, $K = 750$, while on the lower panel of figures  \subref{timestep1000increments} and \subref{timestep1000sig}, $K = 1000$.  On the left panel, subfigures \subref{timestep750increments} and \subref{timestep1000increments} plot the fitting results for the testing set via linear regression on the increment features, while subfigures  \subref{timestep1000increments} and \subref{timestep750sig} plot the fitting results for the testing results via linear regression on the truncated signature of order $4$. }
\label{RegNoOfTimesteps}
\end{figure}

\begin{table}\caption{$R^{2}$ summary}\label{R2_K}
\center
\begin{tabular}{|c|c|c|c|c|}
\hline
  K& 250& 500& 750& 1000\\
  \hline \hline
  Increment Features (OLS) &0.9562 &  0.9337& 0.1661 &-1.3734 \\ 
  \hline
  Increment Features (Lasso + CV) &0.9717 &  0.9621 & 0.9671 & 0.9633 \\ 
  \hline
  Step-4 Signature Features & $\mathbf{0.9997}$& $ \mathbf{0.9998}$  & $\mathbf{0.9996}$  &$\mathbf{0.9997}$ \\ 
  \hline
\end{tabular}
\end{table}

\section{Time series models}\label{timeSeriesEs}
In this section, we focus on the univariate time series case for simplicity, but our results can be easily generalized for a multi-dimensional time series. Let $\{(t_{i}, r_{i})\}_{i=0}^{N}$ denote a univariate time series, where $t_{i}$ represents the $i^{th}$ time stamp and $r_{i}$ represents the $i^{th}$ data point at time $t_{i}$. We explain how to embed a time series segment to the signature in Subsection \ref{SubsectionSigTimeSeries}. 
\subsection{The signature of a time series}\label{SubsectionSigTimeSeries}
 For any $0 \leq m <n \leq N$, and $m,n \in \mathbb{N}$, in general the signature of $\{(t_{i}, r_{i})\}_{i=m}^{n}$ is computed via the following two steps:
\begin{enumerate}
\item Embed a time series $\{(t_{i}, r_{i})\}_{i = m}^{n}$ into a continuous path;
\item Compute the signature of this transformed continuous path $R$.
\end{enumerate}
There are several choices of embedding, including
\begin{enumerate}
\item[(a)] the piecewise linear interpolation;
\item[(b)] the lead-lag transformation;
\item[(c)] the time-joined path.
 \end{enumerate}
\begin{remark}
In terms of method $(a)$, it might lose the essential information if $E = \mathbb{R}$, since the signature of this transformed path is only the increment at the terminal time, ignoring all the other information. Roughly speaking $(b)$ is the lifted path composed of the original path and the delayed path, whose definition can be found in \cite{Greg}. An advantage of the lead-lag transformation is that one can read the volatility of the path directly from the second term of the signature, which is very important in finance. But in this paper we aim to bridge our model with the classical time series model, and in this setting the time-joined path is a more suitable candidate, which we will explain in the following section. \end{remark}
\begin{remark}
In cases of the higher dimensional time series or high frequency time series, if one wants to summarize the information about time series ignoring time re-parameterizations, one should consider $(a)$ and $(b)$ instead of $(c)$, since our representation of the history through ''the signature'' allows high frequency information to be summarized and incorporated using only a few parameters and so preventing over fitting that is frequently present when trying to use highly sampled data to predict events on longer time intervals.
\end{remark}
\subsubsection{The transformation by retaining the time component}
\begin{definition}[Time-joined transformation]
Let $\{(t_{i}, r_{i})\}_{i = m}^{n}$ be a univariate time series. Let $R: [2m, 2n+1] \rightarrow \mathbb{R}^{+} \times \mathbb{R}$ be a $2$-dimensional time-joining path of $\{(t_{i}, r_{i})\}_{i = m}^{n}$, which is defined as follows:
\begin{eqnarray*}
R(s) = \begin{cases}   t_{m}e_{1}+r_{m}(s-2m)e_{2}, & \text{if $s \in [2m, 2m+1)$;}\\
[ t_{i}+(t_{i+1}- t_{i})(s-2i-1)]e_{1}+r_{i}e_{2}, & \text{if $s \in [2i+1, 2i+2)$;}\\
  t_{i+1}e_{1}+[r_{i}+(r_{i+1}-r_{i})(s-2i-2)]e_{2}, & \text{if $s \in [2i+2, 2i+3)$;}
\end{cases}
\end{eqnarray*}
where $i = m, m+1, ..., n-1$  and $\{e_{i}\}_{i = 1, 2}$ is an orthonormal basis of $ \mathbb{R}^{2}$.
\end{definition}
\begin{remark}
The continuous function $R$ is simply defined as keeping $r_{i}$ value at the time interval $[t_{i}, t_{i+1})$. When the new data $r_{i+1}$ arrives at time $t_{i+1}$, there is an instantaneous jump from $r_{i}$ to $ r_{i+1}$. We add one more point $0$ at the time $t_{m}$ to the time series $\{r_{i}\}_{i=m}^{n}$ to make it a new time series, such that the signature of  $\{r_{i}\}_{i=m}^{n}$ can uniquely determine $\{r_{i}\}_{i=m}^{n}$(Lemma \ref{Lemma_sig_determine_path}).
\end{remark}
The signature of $\{(t_{i}, r_{i})\}_{i = m}^{n}$ is defined to be the signature of the time-joined transformation of $\{(t_{i}, r_{i})\}_{i = m}^{n}$.
\begin{definition}\label{timeSeriesDef}
Let $\{(t_{i}, r_{i})\}_{i = m}^{n}$ be a time series and embedded into time-joined path $R$. The signature of the time series $\{(t_{i}, r_{t_{i}})\}_{i = m}^{n}$ is defined by the signature of the path $\{R(s)\}_{s \in [2m, 2n+1]}$ and we denote it by $S(\{(t_{i}, r_{i})\}_{i = m}^{n})$, where $ 1 \leq m < n \leq N$ and $ m,n \in \mathbb{N}$.
\end{definition}
Let us consider a simple example, and illustrate how to obtain a continuous function $R$ from the time series via the figure.
\begin{example}
Suppose that a univariate time series is given as
\begin{eqnarray*}
\{(2,2), (3,5), (4, 3), (5, 4) , (6, 6), (7, 3) ,(8, 2)\},
\end{eqnarray*}
and plotted below as ${\color{green} *}$, then the function $R$ is shown in the blue curve in Figure \ref{timeSeriesPath}.
\end{example}
\begin{figure}[timeSeriesPath]
  \centering
    \includegraphics[width=3.2in]{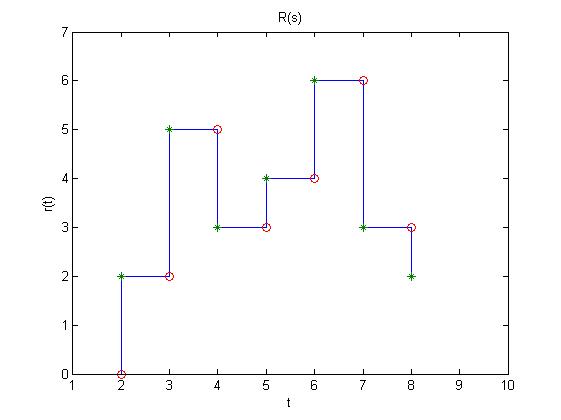}
    \caption{Embedding the time series into the continuous function $R$.}\label{timeSeriesPath}
\end{figure}
\begin{lemma}
The signature of a time series $\{(t_{i}, r_{i})\}_{i = m}^{n}$ is
\begin{eqnarray*}
 \exp(r_{1} e_{2}) \overset{n-1}{\underset {i = m} {\bigotimes}} \left( \exp((t_{i+1}-t_{i})e_{1}) \otimes \exp((r_{i+1}-r_{i})e_{2}) \right),
\end{eqnarray*}
where $ 1 \leq m < n \leq N$ and $ m,n \in \mathbb{N}$.
\end{lemma}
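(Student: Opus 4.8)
The plan is to recognize the time-joined path $R$ as a piecewise linear path and then read off its signature from the two earlier Examples: the one computing the signature of a single linear segment, $S(X) = \exp(X_t - X_s)$, and the one giving the signature of a concatenation of linear segments as the ordered tensor product of the exponentials of the successive increments. Both of these rest on Chen's identity (Theorem \ref{ChenIdentity}), so the whole argument reduces to identifying the linear pieces of $R$, computing their increments, and multiplying the corresponding exponentials together in the correct order.

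First I would break the domain $[2m, 2n+1]$ into the unit intervals on which the piecewise definition of $R$ is affine. On $[2m, 2m+1]$ the path moves only in the $e_2$ direction, from the prepended value $0$ up to $r_m$, so its increment is $r_m e_2$. On each interval $[2i+1, 2i+2]$ (for $i = m, \dots, n-1$) the path moves only in the $e_1$ (time) direction, contributing the increment $(t_{i+1}-t_i)e_1$, while on each interval $[2i+2, 2i+3]$ it moves only in the $e_2$ (value) direction, contributing $(r_{i+1}-r_i)e_2$. One checks directly from the formula that the right endpoint of each piece coincides with the left endpoint of the next, so that $R$ is genuinely continuous and is exactly the concatenation of these linear segments.

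Applying the Example for a piecewise linear path (equivalently, iterating Chen's identity together with $S(\text{linear}) = \exp(\text{increment})$) then expresses the signature of $R$ as the ordered tensor product of the exponentials of these increments:
\begin{eqnarray*}
S(R) = \exp(r_m e_2) \otimes \bigotimes_{i=m}^{n-1} \Big( \exp((t_{i+1}-t_i)e_1) \otimes \exp((r_{i+1}-r_i)e_2) \Big),
\end{eqnarray*}
which is precisely the asserted expression, the single leading factor $\exp(r_m e_2)$ arising from the initial segment (here the general starting index $m$ plays the role written as $1$ in the statement).

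I do not expect a genuine obstacle: the content is entirely bookkeeping. The only points needing care are verifying the continuity of $R$ at the junctions $s = 2i+1$ and $s = 2i+2$, so that Chen's identity applies cleanly across every join, and preserving the left-to-right order of the tensor factors, since $e_1$ and $e_2$ do not commute in $T((E))$. The prepended point $0$ at time $t_m$ is what produces the extra leading factor $\exp(r_m e_2)$ sitting outside the product, and tracking it correctly is the one place where an off-by-one slip in the indexing is easy to make.
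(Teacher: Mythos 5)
Your proposal is correct and is essentially the paper's own argument: the paper's proof is the single sentence that $R$ is piecewise linear, so induction with Chen's identity (Theorem \ref{ChenIdentity}) and the exponential formula for linear segments gives the result, which is exactly the decomposition you carry out explicitly. Your more detailed bookkeeping (identifying the increments $r_m e_2$, $(t_{i+1}-t_i)e_1$, $(r_{i+1}-r_i)e_2$ on the unit subintervals and noting that the statement's $r_1$ should read $r_m$) is a faithful, fleshed-out version of the same proof.
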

\begin{proof}
Since the transformed path $R$ is piecewise linear, by induction using Chen's identity we will have this lemma result immediately.
\end{proof}

\subsubsection{The properties of the signature of a time series}
Now we discuss the main properties of the signature of a time series $\{(t_{i}, r_{i})\}_{i = m}^{N}$.
\begin{lemma}\label{Lemma_sig_determine_path}
Suppose that $0 < m < n \leq N, m, n \in \mathbb{N}$. The signature of a time series $\{(t_{i}, r_{i})\}_{i = m}^{n}$  uniquely determines the time series $\{(t_{i}, r_{i})\}_{i = m}^{n}$.
\end{lemma}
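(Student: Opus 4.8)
The plan is to combine the explicit product-of-exponentials formula for the signature of the time-joined path $R$ (the preceding lemma) with the uniqueness theorem for signatures. First I would record the geometry of $R$: it is piecewise linear, and its first (time) coordinate is non-decreasing, being strictly increasing on the ``time'' segments $s\in[2i+1,2i+2)$ and constant on the ``value'' segments; the two families of segments move in the orthogonal directions $e_{1}$ and $e_{2}$, so $R$ never retraces any arc it has already traversed and is therefore tree-reduced. By the uniqueness theorem (a continuous path of bounded variation is determined by its signature up to tree-like equivalence), together with the monotone-coordinate sufficient condition noted in the accompanying remark, $S(R)$ determines $R$ up to tree-like equivalence; since $R$ is already tree-reduced and the signature is invariant under reparameterization (Lemma \ref{SigTimeInvariance}), what $S(R)$ pins down is precisely the oriented curve traced out by $R$.

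Next I would read the data off this oriented curve. It is an axis-aligned staircase whose successive vertices are
\[
(t_{m},0),\,(t_{m},r_{m}),\,(t_{m+1},r_{m}),\,(t_{m+1},r_{m+1}),\,\dots,\,(t_{n},r_{n}),
\]
so listing the vertices in order returns every pair $(t_{i},r_{i})$. Here the extra point $0$ prepended at time $t_{m}$ is essential: the signature sees only increments (it is translation invariant), so without the initial vertical move of height $r_{m}-0$ one would recover merely the differences $r_{i+1}-r_{i}$, whereas the prepended $0$ converts the absolute level $r_{m}$ into a genuine, signature-visible increment and thereby fixes all of $r_{m},r_{m+1},\dots,r_{n}$. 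The time stamps are recovered as the gaps $t_{i+1}-t_{i}$, which together with the starting time $t_{m}$ (part of the given data, or normalized by convention) determine all the $t_{i}$.

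The step I expect to be the genuine obstacle is the tree-reducedness/separation step, not the bookkeeping of reading off vertices. A purely algebraic alternative is to argue directly from the factorization $S(R)=\exp(r_{m}e_{2})\bigotimes_{i=m}^{n-1}\bigl(\exp((t_{i+1}-t_{i})e_{1})\otimes\exp((r_{i+1}-r_{i})e_{2})\bigr)$ by induction on $n-m$, peeling off one exponential factor at a time via Chen's identity (Theorem \ref{ChenIdentity}); the arithmetic is routine once the factors are separated, but the difficulty is exactly this separation, since $\exp((t_{i+1}-t_{i})e_{1})$ and $\exp((r_{i+1}-r_{i})e_{2})$ do not commute and $S(R)$ presents only their ordered product. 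Monotonicity of the time coordinate is the structural fact that forbids two distinct staircases from yielding the same product. The one case I would flag explicitly is a degenerate value step $r_{i+1}=r_{i}$: then the intervening vertical segment collapses, the two adjacent horizontal factors commute and merge into $\exp((t_{i+2}-t_{i})e_{1})$, and the time stamp $t_{i+1}$ becomes invisible. This is harmless once one reads the conclusion correctly, namely that $S(R)$ determines the time series as the piecewise-constant function $t\mapsto r(t)$ (equivalently, the reduced list of vertices), with such redundant repeated samples carrying no additional information; under the mild genericity assumption that consecutive values differ, the full sample list is recovered.
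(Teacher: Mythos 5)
Your approach is essentially the paper's own: its one-sentence proof likewise observes that $\pi^{1}\circ R$ is non-decreasing so that $R$ has no tree-like behaviour, invokes the uniqueness theorem for signatures of bounded-variation paths, and uses the given starting point $R(2m)=t_{m}e_{1}$ to pass from signature-visible increments back to the series; your staircase-vertex bookkeeping and your remarks on the prepended zero are that same argument spelled out carefully. The one substantive difference is in your favour: the degenerate case $r_{i+1}=r_{i}$ that you flag is not merely a case to be finessed but a genuine counterexample to the lemma as stated, which the paper's proof passes over in silence. For instance, the series $\{(0,1),(1,1),(2,2)\}$ and $\{(0,1),(\tfrac{3}{2},1),(2,2)\}$ have identical time-joined paths as oriented curves, since the intermediate vertical segment collapses and
\[
\exp\bigl((t_{i+1}-t_{i})e_{1}\bigr)\otimes\exp(0\cdot e_{2})\otimes\exp\bigl((t_{i+2}-t_{i+1})e_{1}\bigr)=\exp\bigl((t_{i+2}-t_{i})e_{1}\bigr),
\]
so they have identical signatures while the time stamp $t_{i+1}$ differs; hence the lemma requires your genericity hypothesis that consecutive values differ (or the weaker conclusion that only the reduced vertex list, i.e.\ the function $t\mapsto r(t)$, is determined). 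So: same route, but your version identifies and repairs a real gap in the statement that the paper's proof does not address.
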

\begin{proof}
Since the function $\pi^{1}\circ R$ is a non-decreasing function, the function $R$ can't be tree-like. Thus there is one-to-one correspondence between $\{t_{i}, r_{i})\}_{i = m}^{n}$ and the signature of  $\{(t_{i}, r_{i})\}_{m}^{n}$ since the ending point of the function $R(.)$ is given that $R(2m) = t_{m}e_{1}$.
\end{proof}
The following lemma states that $\{r_{i}\}_{i=1}^{n}$ can be represented as a linear functional on its signature of $\{(t_{i}, r_{i})\}_{i = 1}^{n}$.
\begin{lemma}\label{LemmaLinearSig}
Let $\mathbf{X}$ denote the signature of a time series $\{(t_{i}, r_{i})\}_{i = 1}^{n}$. Assume that $\{t_{i}\}_{i=1}^{n}$ are known. Then
\begin{eqnarray*}
\Delta R= T^{-1}A,
\end{eqnarray*}
where
\begin{eqnarray*}
&&A :=  \left( \begin{array}{c}
0!\pi^{2}(\mathbf{X}) \\
1!\pi^{12}(\mathbf{X}) \\
\vdots\\
(n-1)!\pi^{1\dots 12}(\mathbf{X}) \end{array} \right), T: =  \left( \begin{array}{ccccc}
1&1  & \dots& 1\\
t_{1}&t_{2}  & \dots &t_{n}\\
\vdots&\vdots & \ddots & \vdots\\
t_{1}^{n-1}& t_{2}^{n-1} &\dots & t_{n}^{n-1}\end{array} \right), \\
&&\Delta R:=  \left( \begin{array}{c}
r_{1}\\
r_{2} - r_{1}  \\
\vdots\\
r_{n} - r_{n-1}
\end{array} \right).
\end{eqnarray*}
\end{lemma}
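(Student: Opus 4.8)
The plan is to read the entries of $A$ directly off the explicit product formula for the signature of a time-joined path established in the preceding lemma, namely
\[
\mathbf{X} = \exp(r_{1}e_{2})\bigotimes_{i=1}^{n-1}\left(\exp((t_{i+1}-t_{i})e_{1})\otimes \exp((r_{i+1}-r_{i})e_{2})\right),
\]
and then to recognise the Vandermonde structure. Every entry of $A$ is built from a word of the special form $\underbrace{1\cdots1}_{k}2$, so it suffices to evaluate $\pi^{\underbrace{1\cdots1}_{k}2}(\mathbf{X})$ for $k=0,1,\dots,n-1$ and to show that, after multiplication by $k!$, these coincide with the rows of the product $T\,\Delta R$. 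Writing $\Delta r_{1}=r_{1}$ and $\Delta r_{i}=r_{i}-r_{i-1}$ for $i\geq 2$ matches the definition of $\Delta R$ (the prepended point $0$ accounts for the first entry being $r_{1}$).

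First I would exploit the geometry of the time-joined path $R$: its time coordinate $R^{(1)}$ is non-decreasing and moves only on the $e_{1}$-segments, while its value coordinate $R^{(2)}$ moves only on the $e_{2}$-segments, on each of which $R^{(1)}$ is frozen at the sampling value $t_{i}$. Consequently, in
\[
\pi^{\underbrace{1\cdots1}_{k}2}(\mathbf{X})=\int_{u_{1}<\cdots<u_{k}<u_{k+1}} dR^{(1)}_{u_{1}}\cdots dR^{(1)}_{u_{k}}\,dR^{(2)}_{u_{k+1}},
\]
the outermost differential $dR^{(2)}_{u_{k+1}}$ is supported on the $e_{2}$-segments, contributing total mass $\Delta r_{i}$ on the $i$-th such segment, while the inner $k$-fold time integral up to that point evaluates by the elementary identity $\int_{u_{1}<\cdots<u_{k}<u}dR^{(1)}_{u_{1}}\cdots dR^{(1)}_{u_{k}}=\tfrac{1}{k!}(R^{(1)}_{u}-R^{(1)}_{\mathrm{start}})^{k}$ to $t_{i}^{k}/k!$. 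Summing over segments would give
\[
\pi^{\underbrace{1\cdots1}_{k}2}(\mathbf{X})=\sum_{i=1}^{n}\frac{t_{i}^{k}}{k!}\,\Delta r_{i},
\]
so that $k!\,\pi^{\underbrace{1\cdots1}_{k}2}(\mathbf{X})=\sum_{i=1}^{n}t_{i}^{k}\Delta r_{i}$, which is exactly the $(k+1)$-st row of $T\,\Delta R$. Letting $k$ range over $0,\dots,n-1$ assembles the single matrix identity $A=T\,\Delta R$.

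Finally I would conclude by inverting the Vandermonde matrix $T$: because the sampling times satisfy $t_{1}<\cdots<t_{n}$ and are therefore distinct, $T$ is non-singular, which yields $\Delta R=T^{-1}A$. The hard part will be making the frozen-time evaluation of the mixed word $\underbrace{1\cdots1}_{k}2$ fully rigorous — that is, justifying the Fubini-type factorisation of the iterated integral across the alternating $e_{1}$/$e_{2}$ segments, and checking the starting-value convention carefully, since with $R^{(1)}_{\mathrm{start}}=t_{1}$ the powers would come out as $(t_{i}-t_{1})^{k}$ rather than $t_{i}^{k}$, so the stated form of $T$ tacitly measures time from the origin. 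A cleaner route that sidesteps this bookkeeping is to prove the segment sum by induction on $n$, applying Chen's identity (Theorem \ref{ChenIdentity}) to the product formula and peeling off one factor $\exp((t_{i+1}-t_{i})e_{1})\otimes\exp((r_{i+1}-r_{i})e_{2})$ at a time.
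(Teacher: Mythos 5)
Your proposal follows essentially the same route as the paper's own proof: evaluate $\pi^{(1,\dots,1,2)}(\mathbf{X})$ segment by segment, using that $dR^{(2)}$ is supported on the pieces where the time coordinate is frozen at a sampling value, assemble the resulting sums into the single identity $A = T\,\Delta R$, and invert the Vandermonde matrix since the $t_{i}$ are distinct. Your caveat about the starting value is well spotted and is in fact a gap in the paper's argument, not just bookkeeping: the paper writes the inner iterated integral as $\frac{1}{k!}\bigl(R^{(1)}_{s}\bigr)^{k}$ when it actually equals $\frac{1}{k!}\bigl(R^{(1)}_{s}-t_{1}\bigr)^{k}$, so the identity with the stated $T$ requires $t_{1}=0$; otherwise $T$ must be built from the shifted nodes $t_{i}-t_{1}$, which is still an invertible Vandermonde matrix, so the conclusion that $\Delta R$ is recoverable from $\mathbf{X}$ survives.
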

\begin{proof}
Assume that $\{t_{i}\}_{i = 1}^{n}$ are known. Consider the coordinate signature indexed by $(\underset{k-1}{\underbrace{1, 1, \dots, 1}}, 2)$, where there are $k-1$ copies of 1. We have that
\begin{eqnarray*}
\pi^{(1, \dots, 1, 2)}(\mathbf{X}) = \int_{0}^{2n-1} \frac{1}{k!} \{R_{s}^{(1)}\} ^{k}dR_{s}^{(2)}.
\end{eqnarray*}
Let us recall the definition of $R_{s}$:
\begin{eqnarray*}
R(s) = \begin{cases}   t_{m}e_{1}+r_{m}(s-2m)e_{2}, & \text{if $s \in [2m, 2m+1)$;}\\
[ t_{i}+(t_{i+1}- t_{i})(s-2i-1)]e_{1}+r_{i}e_{2}, & \text{if $s \in [2i+1, 2i+2)$;}\\
  t_{i+1}e_{1}+[r_{i}+(r_{i+1}-r_{i})(s-2i-2)]e_{2}, & \text{if $s \in [2i+2, 2i+3)$;}
\end{cases}
\end{eqnarray*}
where $i = m, m+1, ..., n-1$  and $\{e_{i}\}_{i = 1,2}$ is an orthonormal basis of $\mathbb{R}^{2}$.\\
So $\pi^{(1, \dots, 1, 2)}(\mathbf{X})$ can be simplified to
\begin{eqnarray*}
\pi^{(1, \dots, 1, 2)}(\mathbf{X}) &=&  \int_{0}^{1}\frac{1}{k!}t_{1}^{k}d(r_{t_{1}}s)+ \sum_{i=1}^{n-1} \int_{2i}^{2i+1}\frac{1}{k!}t_{i}^{k}d((r_{t_{i+1}}-r_{t_{i}})s) \\
&=&  \frac{1}{k!}t_{1}^{k}r_{t_{1}}+ \sum_{i=1}^{n-1}\frac{1}{k!}t_{i}^{k}(r_{t_{i+1}}-r_{t_{i}}).\\
\end{eqnarray*}
For $n \geq 2$,
it can be rewritten in the matrix form:
\begin{eqnarray*}
A = T \Delta R,
\end{eqnarray*}
where
\begin{eqnarray*}
&&A :=  \left( \begin{array}{c}
0!\pi^{(2)}(\mathbf{X}) \\
1!\pi^{(1, 2)}(\mathbf{X}) \\
\vdots\\
(n-1)!\pi^{(1, \dots, 1, 2)}(\mathbf{X}) \end{array} \right), T: =  \left( \begin{array}{ccccc}
1&1  & \dots& 1\\
t_{1}&t_{2}  & \dots &t_{n}\\
\vdots&\vdots & \ddots & \vdots\\
t_{1}^{n-1}& t_{2}^{n-1} &\dots & t_{n}^{n-1}\end{array} \right), \\
&&\Delta R:=  \left( \begin{array}{c}
r_{1}\\
r_{2} - r_{1}  \\
\vdots\\
r_{n} - r_{n-1}
\end{array} \right).
\end{eqnarray*}
Since $T$ is a square Vandermonde matrix, and all $t_{i}$ are different from each other for $i=0, 1, \dots n-1$, it is invertible. So we have
\begin{eqnarray*}
\Delta R = T^{-1}A.
\end{eqnarray*}
\end{proof}
\subsection{The ES model for time series}
Let $\{r_{i}\}_{i=1}^{N}$ denote a univariate time series. Fix two positive integers $p$ and $q$. For the fixed $k \in \mathbb{N}$, denote the information set available at time $t_{k}$, i.e. the past returns before $t_{k}$ by $\mathcal{F}_{k}$. In this context, $\mathbf{Y}_{k}$ is the signature of the future return series $S(\{t_{i}, r_{i}\}_{i =k+1}^{k+q} )$, and $\mathbf{X}_{k}$ is the signature of the past return series $S(\{t_{i}, r_{i}\}_{i =k-p}^{k} )$.  Let us introduce the ES model as follows.
\begin{definition}[$ES(p,q,n,m)$]
Suppose that a univariate time series $\{r_{i}\}_{i=1}^{N}$ is stationary. We say that $\{r_{i}\}_{i=1}^{N}$ satisfies the assumptions of the $ES$ model with parameters $p, q, n$ and $m$, denoted by $ES(p,q,n,m)$ if there exists a linear function $f: T^{n}(\mathbb{R}^{2}) \rightarrow T^{m}(\mathbb{R}^{2})$ such that
\begin{eqnarray*}
\rho_{m}(S(\{r_{t+i}\}_{i=1}^{q})) = f(\rho_{m}(S(\{r_{t-i}\}_{i=0}^{p})))+a_{t},
\end{eqnarray*}
where $N$ is a positive integer such that $N \geq p+q$, and the residual terms $a_{t}$ satisfy
\begin{eqnarray*}
\mathbb{E}[a_{t} \vert \mathcal{F}_{t}]=0.
\end{eqnarray*}
\end{definition}
Let $\mu_{k}$ denote the expectation of $S(\{t_{i}, r_{i}\}_{i =k+1}^{k+q} )$ conditional on the information up to the time $t_{k}$, i.e.
\begin{eqnarray}
\mu_{k} &=& \mathbb{E}[S(\{t_{i}, r_{i}\}_{i =k+1}^{k+q} ) \vert \mathcal{F}_{k} ],\label{MeanEquation}
\end{eqnarray}
and $\mu_{k}$ is a function on $\mathbf{X}_{k}$, i.e. there exists a function $f: T((\mathbb{R}^{2})) \rightarrow T((\mathbb{R}^{2}))$ such that
\begin{eqnarray*}
\mu_{k} = f(\mathbf{X}_{k}).
\end{eqnarray*}
Observe that $\mu_{k}$ and $a_{k}$ take values in $T((E))$. The conditional covariance of the signature of the future return series $S(\{t_{i}, r_{i}\}_{i =k+1}^{k+q} )$ given $\mathcal{F}_{k}$ is defined as the function $\Sigma_{k}^{2}: A^{*} \times A^{*} \rightarrow \mathbb{R}$:
\begin{eqnarray}
\Sigma_{k}^{2}(I, J) =  \mathrm{Cov}(\pi^{I}(\mathbf{Y_{k}}) , \pi^{J}(\mathbf{Y_{k}}) \vert \mathcal{F}_{k}),\label{VolatilityEquation}
\end{eqnarray}
where $I, J \in A^{*}$.

The model for $\mu_{k}$ in Equation (\ref{MeanEquation}) is referred to the mean equation for $\mathbf{Y}_{k}$ and the model for $\Sigma_{k}^{2}$ in Equation (\ref{VolatilityEquation}) is the covariance equation for $\mathbf{Y}_{k}$. Correspondingly we use $\mu_{k}^{(m)}$ and $(\Sigma_{k}^{2})^{(m)}$ to denote the expectation and the covariance of the truncated signature $\rho_{m}(\mathbf{Y_{k}})$ conditional on the information up to the time $t_{k}$ respectively. Lemma \ref{MeanDeterminesVar} shows that $\mu_{k}$ determines $\Sigma_{k}^{2}$ due to the shuffle product property of the signature. In the ES model, by assigning the model for the mean equation for the truncated signature of the future returns of the order $2n$, it automatically determines the conditional variance structure of the truncated signature of that of order $n$. 

The fundamental assumption of the ES model is the stationarity of the time series $\{r_{i}\}$, which is standard in the time series analysis. A time series $\{r_{k}\}$ is said to be strictly stationary if the joint distribution of $(r_{t_{1}}, ..., r_{t_{k}})$ is identical to that of  $(r_{t_{1}+\tau}, ..., r_{t_{k}+\tau})$ for any $\tau \in \mathbb{R}^{+}$, where $k$ is an arbitrary positive integer and $(t_{1}, \dots t_{k})$ is a collection of $k$ positive integers (see \cite{AnalysisofFinancialTimeSeries}). It implies that the distribution of the signature of  $(r_{t_{1}}, ..., r_{t_{k}})$ is invariant under the time shift as well.

The $ES(p, q, n, m)$ model assumes that the distribution of $ r_{k+1}, \dots, r_{k+q}$ on condition of the current information $ \mathcal{F}_{k}$ only depends on the truncated signature of the $p$-lagged data points $r_{k-p}, \dots, r_{k-1}, r_{k}$ of the order $n$, which are rich enough to approximate any smooth mean function on $p$-lagged data. The conditional expectation of signature of  $r_{k+1}, \dots, r_{k+q}$ provides a non-parametric way to characterise the distribution of $r_{k+1}, \dots, r_{k+q}$ given the information $\mathcal{F}_{k}$, following the discussion about the expected signature model framework in Section \ref{GeneralExpectedSignature}.

\subsection{The link between the classical time series models and the ES model}
There are various autoregressive-type models in statistics and econometrics, e.g. Autoregressive (AR) model and Autoregressive Conditional Heteroskedasticity (ARCH) models. (The definition of AR and ARCH models can be found in the appendix.) These time series models focused on modeling and estimation of the conditional expectation and variance of the future data $r_{k+1}$ given the information up to time $t_{k}$, which are denoted by $m_{k}$ and $\sigma_{k}^{2}$ respectively, i.e.
\begin{eqnarray*}
m_{k} &:=& \mathbb{E}[r_{k+1}\vert \mathcal{F}_{k} ], \\
\sigma_{k}^{2} &:=& \mathrm{Var}[r_{k+1} \vert \mathcal{F}_{k} ].
\end{eqnarray*}
$m_{k}$ and $\sigma^{2}_{k}$ are referred to the mean equation for $r_{k}$ and the variance equation for $r_{t}$, which can be obtained by $\mu_{k}$ - the expected signature of $\{(t_{k+i}, r_{k+i})\}_{i = 0}^{q}$ on condition of $\mathcal{F}_{k}$, where $q = 1$. That's because
\begin{eqnarray*}
m_{k} &=& \pi^{(2)}(\mu_{k}), \\
\sigma_{k}^{2} &=& 2\pi^{(2, 2)}(\mu_{k})- (\pi^{(2)}(\mu_{k}))^{2}.
\end{eqnarray*}

As linear forms on the signature of $p$-lagged values of $r_{t}$ are dense in the space of smooth functions on  $p$-lagged values of $r_{t}$, it incorporates the classical time series, like AR, ARCH and so on. Next we use the ARCH model as an example to show that it is a special case of the ES model. To do so, we need the following auxiliary lemma about the moments of the return $r_{k}$ conditional on the information up to time $k-1$.
\begin{lemma}\label{Lemma_moment_ARCH}
Suppose that a time series $\{r_{k}\}$ satisfies $ARCH(q)$ model given in Definition $\ref{Def_ARCH}$. $\mu_{k}$ is the mean equation in the form as follows:
\begin{eqnarray*}
\mu_{k} = \beta_{0} + \sum_{i = 1}^{Q} \beta_{i} r_{k-i},
\end{eqnarray*}
where $Q$ is a positive integer, $\{\beta_{i}\}_{i=0}^{Q}$ are all constants.
Suppose $z_{k}$ is a white noise satisfying the condition that $z_{k}$ has the moments up to degree $n$ and $\mathbb{E}[z_{k}^{j}] = 0$ if $j$ is an odd integer. Then for every positive integer $n \geq 1$, $\mathbb{E}[r_{k}^{n} \vert \mathcal{F}_{k-1}] $ is a polynomial of lagged $(q+Q)$ values of $r_{k}$.
\end{lemma}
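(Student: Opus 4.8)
The plan is to compute the conditional moment explicitly through the ARCH decomposition and then track, term by term, that everything surviving is a polynomial in the lagged returns. First I would recall the $ARCH(q)$ structure from Definition \ref{Def_ARCH}: writing $a_k := r_k - \mu_k$ for the innovation, we have $a_k = \sigma_k z_k$ with
\[
\sigma_k^2 = \alpha_0 + \sum_{i=1}^q \alpha_i\, a_{k-i}^2,
\]
where $z_k$ is the white noise, independent of $\mathcal{F}_{k-1}$, and both $\mu_k$ and $\sigma_k$ are $\mathcal{F}_{k-1}$-measurable. Substituting $r_k = \mu_k + \sigma_k z_k$ and expanding by the binomial theorem, the conditional expectation factorizes because $z_k$ is independent of the past:
\[
\mathbb{E}[r_k^n \mid \mathcal{F}_{k-1}] = \sum_{j=0}^n \binom{n}{j}\, \mu_k^{\,n-j}\, \sigma_k^{\,j}\, \mathbb{E}[z_k^{\,j}].
\]

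Second --- and this is the decisive step --- I would invoke the hypothesis that $\mathbb{E}[z_k^{\,j}] = 0$ whenever $j$ is odd. This annihilates exactly the terms carrying an odd power of $\sigma_k$, so only even exponents $j = 2l$ remain:
\[
\mathbb{E}[r_k^n \mid \mathcal{F}_{k-1}] = \sum_{2l \le n} \binom{n}{2l}\, \mu_k^{\,n-2l}\, (\sigma_k^2)^{\,l}\, \mathbb{E}[z_k^{\,2l}].
\]
The point is that only even powers of $\sigma_k$ survive, i.e. powers of $\sigma_k^2$; had the odd moments not vanished we would be forced to carry $\sigma_k = \sqrt{\sigma_k^2}$, which is not a polynomial in the returns, so the parity assumption is precisely what keeps us inside the polynomial ring.

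Third I would verify that each surviving factor is a polynomial in the $(q+Q)$ lagged returns. The mean equation gives $\mu_k = \beta_0 + \sum_{i=1}^Q \beta_i r_{k-i}$, a linear --- hence polynomial --- expression in $r_{k-1}, \dots, r_{k-Q}$. Applying the same mean equation at time $k-i$ shows each past innovation $a_{k-i} = r_{k-i} - \mu_{k-i}$ is linear in $r_{k-i}, \dots, r_{k-i-Q}$, so $a_{k-i}^2$ is a degree-two polynomial in those same returns; summing over $i = 1, \dots, q$ in the volatility equation then exhibits $\sigma_k^2$ as a polynomial in $r_{k-1}, \dots, r_{k-q-Q}$, the furthest lag being $q+Q$. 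Since the displayed expression for $\mathbb{E}[r_k^n \mid \mathcal{F}_{k-1}]$ is a finite $\mathbb{R}$-linear combination of products of powers of $\mu_k$ and $\sigma_k^2$, it is itself a polynomial in $r_{k-1}, \dots, r_{k-(q+Q)}$, as claimed.

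The main obstacle is not any single computation but the bookkeeping of the two distinct lag structures --- the length $Q$ coming from the mean equation and the length $q$ from the volatility equation, which compound to $q+Q$ once the innovations $a_{k-i}$ are re-expressed in terms of returns --- together with the observation in the second step that the parity of the noise moments is exactly what is needed to avoid a square root. Everything else is a routine binomial expansion and substitution, so I expect the write-up to be short.
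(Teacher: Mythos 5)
Your proposal is correct and follows essentially the same route as the paper's own proof: binomial expansion of $r_k=\mu_k+\sigma_k z_k$ conditional on $\mathcal{F}_{k-1}$, elimination of the odd-moment terms so that only powers of $\sigma_k^2$ survive, and substitution of the volatility and mean equations to exhibit everything as a polynomial in $r_{k-1},\dots,r_{k-(q+Q)}$. Your explicit remark that the parity assumption is exactly what avoids the square root $\sigma_k=\sqrt{\sigma_k^2}$ is a nice clarification that the paper leaves implicit, but the argument is the same.
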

\begin{proof}
For every positive integer $n \geq 0$ such that $z_{k}$ has finite moments up to degree $n$, it holds that
\begin{eqnarray}
\mathbb{E}[r_{k}^{n} \vert \mathcal{F}_{k-1}] &=& \mathbb{E}[(\mu_{k} + \sigma_{k}z_{k})^{n} \vert \mathcal{F}_{k-1}]\label{eq1}\\
&=& \sum_{j=0}^{n} \mathbb{E}[C_{n}^{j} \mu_{k}^{n-j}(\sigma_{k}z_{k})^{j}\vert \mathcal{F}_{k-1}]\label{eq2}\\
&=& \sum_{j=0}^{n} C_{n}^{j} \mu_{k}^{n-j} \sigma_{k}^{j}\mathbb{E}[z_{k}^{j}];\label{eq3}
\end{eqnarray}
where $C_{n}^{j}$ is the coefficient of the monomial $x^{k}$ in the expansion of $(1+x)^{n}$, i.e. $C_{n}^{j} = \frac{n!}{(n-j)!j!}$ for any integers $j$ and $n$ such that $0 \leq j \leq n$.\\
(\ref{eq1}) comes from the definition of residual terms $\varepsilon_{k} = r_{k}-\mu_{k} = \sigma_{k}z_{k}$, while (\ref{eq2}) and (\ref{eq3}) follows using the linearity of the expectation and the property of white noise $z_{k}$.
By assumption of $z_{k}$, $\mathbb{E}[z_{k}^{j}]$ is zero for all odd $j\leq n$, and thus
\begin{eqnarray*}
\mathbb{E}[r_{k}^{n} \vert \mathcal{F}_{k-1}]= \sum_{j=0 \text{ and } j \text{ is even}}^{n} C_{n}^{j} \mu_{k}^{n-j} \sigma_{k}^{j}\mathbb{E}[z_{k}^{j}].
\end{eqnarray*}
Using the definition of the error term $\varepsilon_{k-i} = r_{k-i} - \mu_{k-i}$ and the variance equation
\begin{eqnarray*}
\sigma_{k}^{2} = \alpha_{0} + \sum_{i = 1}^{q}\alpha_{i} \varepsilon_{k-i}^{2},
\end{eqnarray*}
for any even $j \leq n$, $\sigma_{k}^{j}$ is obviously a polynomial of the $(q+Q)$-lagged values of $r_{k}$, and so is $\mathbb{E}[r_{k}^{n} \vert \mathcal{F}_{k-1}]$, since
\begin{eqnarray*}
\sigma_{k}^{j} = \left( \alpha_{0} + \sum_{i = 1}^{q}\alpha_{i} (r_{k-i} - \mu_{k-i})^{2}\right)^{\frac{j}{2}},
\end{eqnarray*}
where
\begin{eqnarray*}
\mu_{k-i} = \beta_{0} + \sum_{j = 1}^{Q} \beta_{j} r_{k-i-j}, \forall i = 1, \cdots, q.
\end{eqnarray*}
\end{proof}
\begin{theorem}\label{Thm_ARCH_ES}
Suppose that a time series $\{r_{k}\}$ satisfies the assumptions of the $ARCH(q)$ model given in Definition $\ref{Def_ARCH}$ and its mean equation $\mu_{k}$ is in the following form:
\begin{eqnarray*}
\mu_{k} = \beta_{0} + \sum_{i = 1}^{Q} \beta_{i} r_{k-i}.
\end{eqnarray*}
Then there exists a sufficiently large integer $N$ such that a time series $\{r_{k}\}$ satisfies the assumptions of $ES(q+Q, 1, N, 2)$.
\end{theorem}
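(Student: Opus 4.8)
The plan is to reduce the statement to two facts already in hand: by Lemma \ref{Lemma_moment_ARCH} the conditional moments $\mathbb{E}[r_{t+1}^n \vert \mathcal{F}_t]$ of an $ARCH(q)$ series with the stated linear mean equation are \emph{polynomials} in a finite window of past returns, and by the shuffle identity (Theorem \ref{shuffle_theorem}) any such polynomial is a \emph{linear} functional of the past signature once that signature is truncated at a high enough order $N$. Combining the two will exhibit the required linear map $f$.

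First I would pin down which coordinates of the order-$2$ conditional expected signature of the future must be reproduced. With $q=1$ and $m=2$, the nontrivial coordinates of $\rho_2(\mu_k)$ are governed by the conditional mean and variance of $r_{k+1}$ through the relations recorded above, namely $m_k = \pi^{(2)}(\mu_k)$ and $\sigma_k^2 = 2\pi^{(2,2)}(\mu_k) - (\pi^{(2)}(\mu_k))^2$, together with the purely deterministic time coordinates (the words in the single letter $1$). Hence it suffices to show that $m_k$ and $\mathbb{E}[r_{k+1}^2 \vert \mathcal{F}_k]$, equivalently $m_k$ and $\sigma_k^2$, are linear functionals of the truncated signature $\rho_N(\mathbf{X}_k)$ of the past, where $\mathbf{X}_k = S(\{t_i, r_i\}_{i=k-q-Q}^{k})$.

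Next I would invoke Lemma \ref{Lemma_moment_ARCH} at time $k+1$: both $m_k$ and $\mathbb{E}[r_{k+1}^2 \vert \mathcal{F}_k]$ are polynomials (of degree at most $2$) in the $(q+Q)$ lagged returns $r_k, r_{k-1}, \dots, r_{k-q-Q+1}$, all of which lie inside the past window of $p = q+Q$ lags. Each past return is itself a linear functional of $\mathbf{X}_k$: this is the content of the lemma giving $\Delta R = T^{-1} A$, where each increment $r_{i}-r_{i-1}$ is read off from the coordinate integrals $\pi^{(1,\dots,1,2)}(\mathbf{X}_k)$. Consequently every monomial in the past returns is a product of linear forms evaluated on $\mathbf{X}_k$, and by Theorem \ref{shuffle_theorem} such a product equals $(\pi^{I_1} \shuffle \cdots \shuffle \pi^{I_r})(\mathbf{X}_k)$, i.e. again a single linear form on the signature. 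Each of these forms involves only finitely many coordinates, so taking $N$ at least as large as the longest word produced by the shuffles, which is of the order of $2(q+Q+1)$, guarantees that $m_k$ and $\sigma_k^2$ are already linear functionals of the truncated signature $\rho_N(\mathbf{X}_k)$.

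Finally I would assemble $f : T^N(\mathbb{R}^2) \to T^2(\mathbb{R}^2)$ coordinatewise: for each word $I$ of length at most $2$, let $\pi^I \circ f$ be the linear form on $T^N(\mathbb{R}^2)$ constructed above representing $\pi^I(\mu_k)$, with the deterministic time coordinates carried by constant (length-$0$) forms. By construction $f(\rho_N(\mathbf{X}_k)) = \rho_2(\mu_k) = \mathbb{E}[\rho_2(\mathbf{Y}_k) \vert \mathcal{F}_k]$, so the residual $a_k := \rho_2(\mathbf{Y}_k) - f(\rho_N(\mathbf{X}_k))$ satisfies $\mathbb{E}[a_k \vert \mathcal{F}_k] = 0$, which is exactly the defining requirement of $ES(q+Q, 1, N, 2)$. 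The genuinely delicate step is the degree count in the shuffle argument: one must check that the window $\{r_{t-i}\}_{i=0}^{q+Q}$ is wide enough to contain every lagged return fed into Lemma \ref{Lemma_moment_ARCH}, and that the bound on $N$ can be chosen once and for all (independently of $k$), which is where stationarity and the time-shift invariance of the signature are used. The rest is routine linear bookkeeping.
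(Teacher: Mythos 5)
Your proposal is correct and follows essentially the same route as the paper: reduce the claim to expressing $\mathbb{E}[r_{k+1}^{n}\vert\mathcal{F}_{k}]$ for $n=1,2$ as linear functionals on the past signature, obtain these as polynomials in the $(q+Q)$ lagged returns via Lemma \ref{Lemma_moment_ARCH}, and convert polynomials to linear forms using Theorem \ref{shuffle_theorem} together with the recovery of the lagged returns from the signature. If anything, your write-up is more careful than the paper's (which cites Lemma \ref{Lemma_sig_determine_path} where your appeal to the Vandermonde identity $\Delta R = T^{-1}A$ is the more precise reference, and which does not address the degree bound on $N$ or the residual's zero conditional mean).
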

\begin{proof}
It is equivalent to check whether $\mathbb{E}[r_{k}^{n} \vert \mathcal{F}_{k-1}]$ can be expressed as a linear functional on the signature of the $(q+Q)$-lagged values of $r_{k}$ for $n = 1, 2$. By Lemma \ref{Lemma_moment_ARCH}, for $n=1, 2$, $\mathbb{E}[r_{k}^{n} \vert \mathcal{F}_{k-1}]$ have an explicit representation as follows:
\begin{eqnarray*}
\mathbb{E}[r_{k} \vert \mathcal{F}_{k-1}] &=& \mu_{k} = \beta_{0} + \sum_{i = 1}^{Q} \beta_{i} r_{k-i},\\
\mathbb{E}[r_{k}^{2} \vert \mathcal{F}_{k-1}] &=& \alpha_{0} + \sum_{i = 1}^{q}\alpha_{i} (r_{k-i} - \mu_{k-i})^{2}.
\end{eqnarray*}
Therefore for $n = 1, 2$, $\mathbb{E}[r_{k}^{n} \vert \mathcal{F}_{k-1}]$ is a polynomial of the $(q+Q)$-lagged values of $r_{t}$, which implies that there exists a linear functional $f_{n}$, such that $\mathbb{E}[r_{k}^{n} \vert \mathcal{F}_{k-1}] = f_{n}(S(\{t_{k -i}, r_{k-i}\}_{i = 1}^{q+Q}))$, due to Theorem \ref{shuffle_theorem} and Lemma \ref{Lemma_sig_determine_path}.
\end{proof}
\begin{remark}
Similarly it is easy to show that AR and GARCH models are special cases of the ES model as well. $ES(p, 1, n, 2)$ model can be simply considered as the classical time series models using the coordinate signature of the past return series as explanatory variables.
\end{remark}
\section{Examples and numerical computation}\label{NumericalExamples}
\subsection{The data and experimental setup}
In this section, we consider the time series $\{r_{t}\}_{t}$ generated by the following equation
\begin{eqnarray*}
r_{t+1} = m_{t} + \sigma \varepsilon_{t},
\end{eqnarray*}
where $\varepsilon_{t}$ are standard white noise, the volatility is a constant $\sigma$, and the mean equation $m_{t}$ has the following three types:
\begin{enumerate}
\item (AR model)
\begin{eqnarray*}
m_{t}= \Phi_{0} + \sum_{i =1}^{p}\Phi_{i}r_{t-i+1},
\end{eqnarray*}
where $\{\Phi_{i}\}_{i =0}^{p}$ are constant;
\item  (Poly AR model)
\begin{eqnarray*}
m_{t} = f(r_{t}, r_{t-1}, \dots, r_{t-p}) ,
\end{eqnarray*}
where $f$ is a polynomial of degree no less than $2$ in $r_{t}, r_{t-1}, \dots, r_{t-p}$;
\item (Mixture of Poly ARs model)
\begin{eqnarray*}
m_{t} =  \left\{
            \begin{array}{ll}
              f_{1}(r_{t}, r_{t-1}, \dots, r_{t-p}), & \hbox{if $r_{t}> c$;} \\
              f_{2}(r_{t}, r_{t-1}, \dots, r_{t-p}) , & \hbox{if $r_{t}\leq c$.}
            \end{array}
          \right.
\end{eqnarray*}
where $f_{1}$ an $f_{2}$ are both polynomials and $c$ are constant.
\end{enumerate}
Let $\{r_{t}\}_{t = 0}^{N}$ be a time series generated by one of the above models. We use the following three approaches to calibrate $\{r_{t}\}_{t = 0}^{N}$:
\begin{enumerate}
 \item (AR approach) Apply the linear regression of  $r_{t+1}$ against the $p$-lagged values;
 \item (GP approach\footnote{The matlab code of GP regression we used here is written by Carl Edward Rasmussen and Hannes Nickisch and can be downloaded via $http://www.gaussianprocess.org/gpml/code/matlab/doc/$. We own special thanks to Syed Ali Asad Rizvi, who helped us implementing GP regression.  Originally we used the GPfit R-package to implement GP method, but it took a very long time when the learning size is over 200. } ) Apply the Gaussian Process regression of $r_{t+1}$ against the $p$-lagged values. Here we use the exact inference with the Squared Exponential covariance function. Please refer to (\cite{rasmussen2006gaussian}) for more details of the GP regression;
 \item (ES approach) Apply the linear regression of $r_{t+1}$ against the signature of the the $p$-lagged values.
\end{enumerate}
As a parametric approach, the AR approach is standard in removing the mean equation of the AR, ARCH and GARCH models, while as a Bayesian non-parametric method, the GP approach is popular in the fields of machine learning. Therefore those two approaches  are natural candidates to benchmark the performance of the ES approach.

We use the cross-validation(repeated random sub-sampling validation) to measure the predictive error; more precisely, randomly select $M$ of the observations to hold out for the evaluation set, and use the remaining observations for fitting, and repeat this procedure for $N_{cv}$ times. There are several measure of the goodness fit given as follows:
\begin{enumerate}
\item The $R^{2}$ and the adjusted $R^{2}$;
\item MSE (Here it is noticed that the benchmark of the estimated condition mean given the information up to $t$ is true condition mean $m_{t}$ instead of $r_{t+1}$.)
\begin{eqnarray*}
\frac{1}{N}\sum_{t = 1}^{N} (\hat{r}_{t+1} - m_{t})^{2},
\end{eqnarray*}
where $\hat{r}_{t+1}$ is the estimated mean of $r_{t+1}$ on condition of $\mathcal{F}_{t}$ and $N$ is the size of the testing set.
\item The running time.
\end{enumerate}
All the numerical tests are implemented in Matlab7(Release 2014a) using single threaded code on one hardcore.The signatures are computed using the sigtools Python package, which is based on the the libalgebra library of the CoRoPa project2\footnote{Version 0.3, ref.:$\emph{http://coropa.sourceforge.net/}$}, and integrated to matlab.
\subsection{Numerical results}
We generate three time series $\{r^{(i)}_{t}\}$ of length $4000$ where $i = 1, 2, 3$ as follows:
\begin{enumerate}
\item $\{r^{(1)}_{t}\}_{t = 0}^{4000}$ satisfies $AR(3)$ with parameters
\begin{eqnarray*}
[\Phi_{0}, \Phi_{1}, \Phi_{2}, \Phi_{3}] = [0, 0.6, 0.15, -0.1].
\end{eqnarray*}
\item $\{r^{(2)}_{t}\}_{t = 0}^{4000}$ satisfies the Poly AR model with the mean equation as follows:
\begin{eqnarray*}
m_{t} = 0.2 r_{t -2}^{(2)}+ 0.1r_{t}^{(2)}(r_{t-1}^{(2)}-r_{t}^{(2)}).
\end{eqnarray*}
\item $\{r^{(3)}_{t}\}_{t = 0}^{4000}$ satisfies the mixture of two Poly AR models as follows:
\begin{eqnarray*}
m_{t} =  \left\{
            \begin{array}{ll}
              -0.6r_{t-2}^{(3)} -0.15r_{t-1}^{(3)} +0.4r_{t}^{(3)}-0.015\left(r_{t-1}^{(3)}\right)^{2}, & \hbox{if $r_{t}^{(3)}> 0$;} \\
             -0.6r_{t-2}^{(3)} -0.15r_{t-1}^{(3)} +0.8r_{t}^{(3)}-0.015\left(r_{t-1}^{(3)}\right)^{2} , & \hbox{if $r_{t}^{(3)}\leq 0$.}
            \end{array}
          \right.
\end{eqnarray*}
\end{enumerate}
In the simulation of three time series $\{r^{(i)}\}$, we deliberately keep all the parameters the same for those three data sets, except for the choice of the mean equation. We apply the AR approach, the ES approach and the GP approach to calibrate those three data sets. For each time series data, we use the first $80\%$ samples for training, and hold on the rest of samples for backtesting. \\
\\
\textbf{Calibration:} Take the first data set $\{r^{(1)}_{t}\}_{t = 0}^{4000}$ as an example, and corresponding numerical results are outlined here. The AR calibration gives the estimator for $\Phi$ as follows:
\begin{eqnarray*}
\hat{\Phi} := [\hat{\Phi}_{0}, \hat{\Phi}_{1}, \hat{\Phi}_{2}, \hat{\Phi}_{3}] =[0.001464, 0.6264, 0.13848, -0.10571].
\end{eqnarray*}
In our model, we specify $n = 4$ and give all possible indices $I$ and the corresponding estimated coefficients $\{\hat{f}^{I}\}$ as follows:
\begin{eqnarray*}\small
\begin{array}{ccccccc}
\hline
I & ()  & (1) & (2) & (1, 1) &(1,2) & (2,1) \\
\hat{f}^{I} &0&0.0044097&0&0& 0& 0  \\
\hline
I &(2,2)&(1,1,1) &(1,1,2) &(1,2,1) &(1,2,2)& (2,1,1)\\
\hat{f}^{I} & 0& 0 &0.32412&-0.10798&0&-0.055639\\
\hline
I & (2,1,2)& (2,2,1)&(2,2,2)& (1,1,1,1)& (1,1,1,2) &(1,1,2,1)\\
\hat{f}^{I}& 0.011852& 0 & 0 &0 & 0 &0\\
\hline
I & (1,1,2,2) &(1,2,1,1)& (1,2,1,2) &(1,2,2,1)& (1,2,2,2) &(2,1,1,1) \\
\hat{f}^{I}&  -0.019671& 0& 0.025391 &0.028881& 0.015265 & 0\\
\hline
I& (2,1,1,2)&(2,1,2,1)& (2,1,2,2) &(2,2,1,1) &(2,2,1,2)&(2,2,2,1)\\
\hat{f}^{I} & 0&0.0097624& 0.0044422& 0.0020564& - 0.0018335& -0.0026311\\
\hline
I&  (2,2,2,2)&&&& &\\
\hat{f}^{I}& 0.011134&&&& &\\
\hline
\end{array}
\end{eqnarray*}
The GP approach gives the fitted parameters $\ln(\lambda) = 3.5342, \ln(h) = 2.3110$ for the squared exponential covariance function, where $\lambda$ and $h$ are the scale for the input and the output respectively. For the other two data sets, the estimated parameters based on those three methods are not given here, but it can be found in Appendix \ref{AppendixParameter}.
\begin{figure}
  \centering
  \includegraphics[width=4 in]{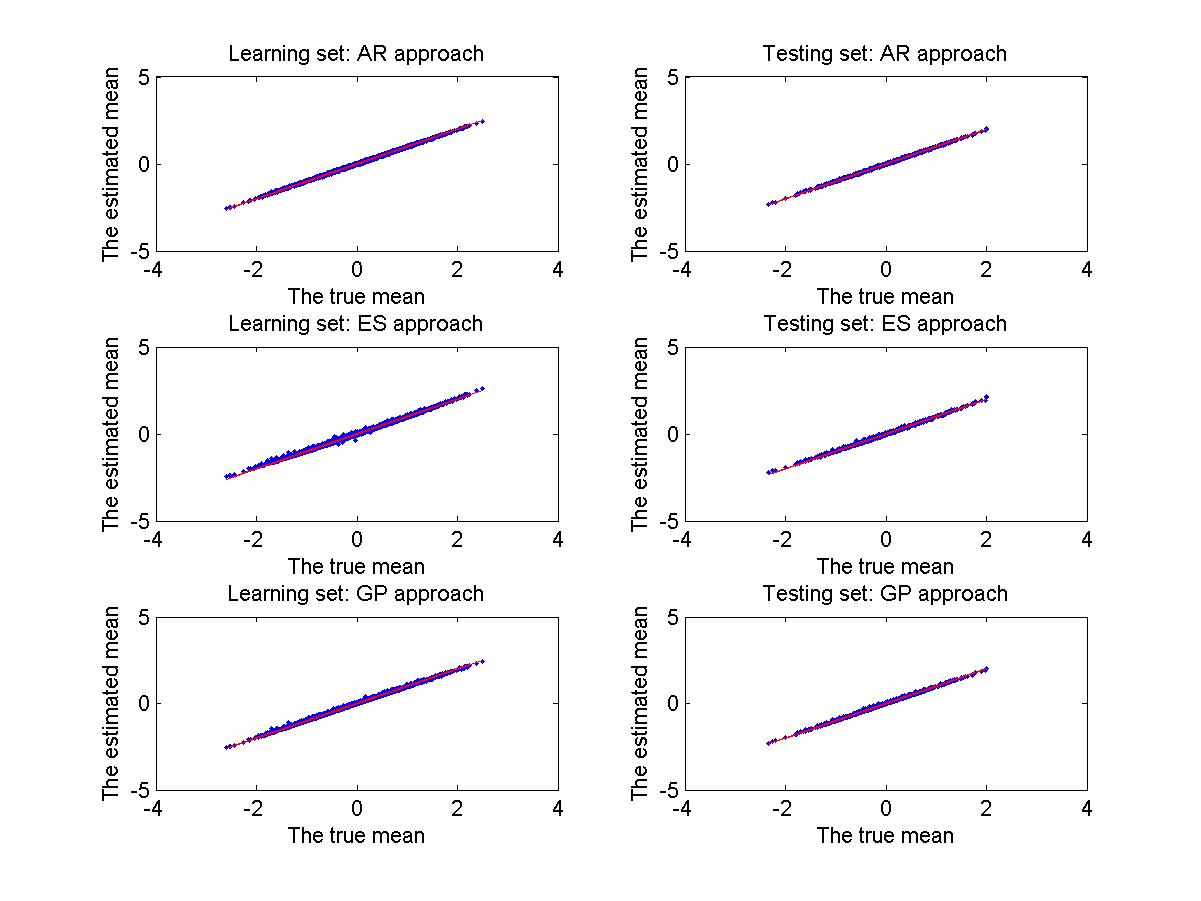}\\
  \caption{The plot of the estimated condition mean of the future return given $\mathcal{F}_{t}$ against its true conditional mean for the data set $r^{(1)}$.}\label{AR_fig}
\end{figure}
\begin{figure}
  \centering
  \includegraphics[width=4 in]{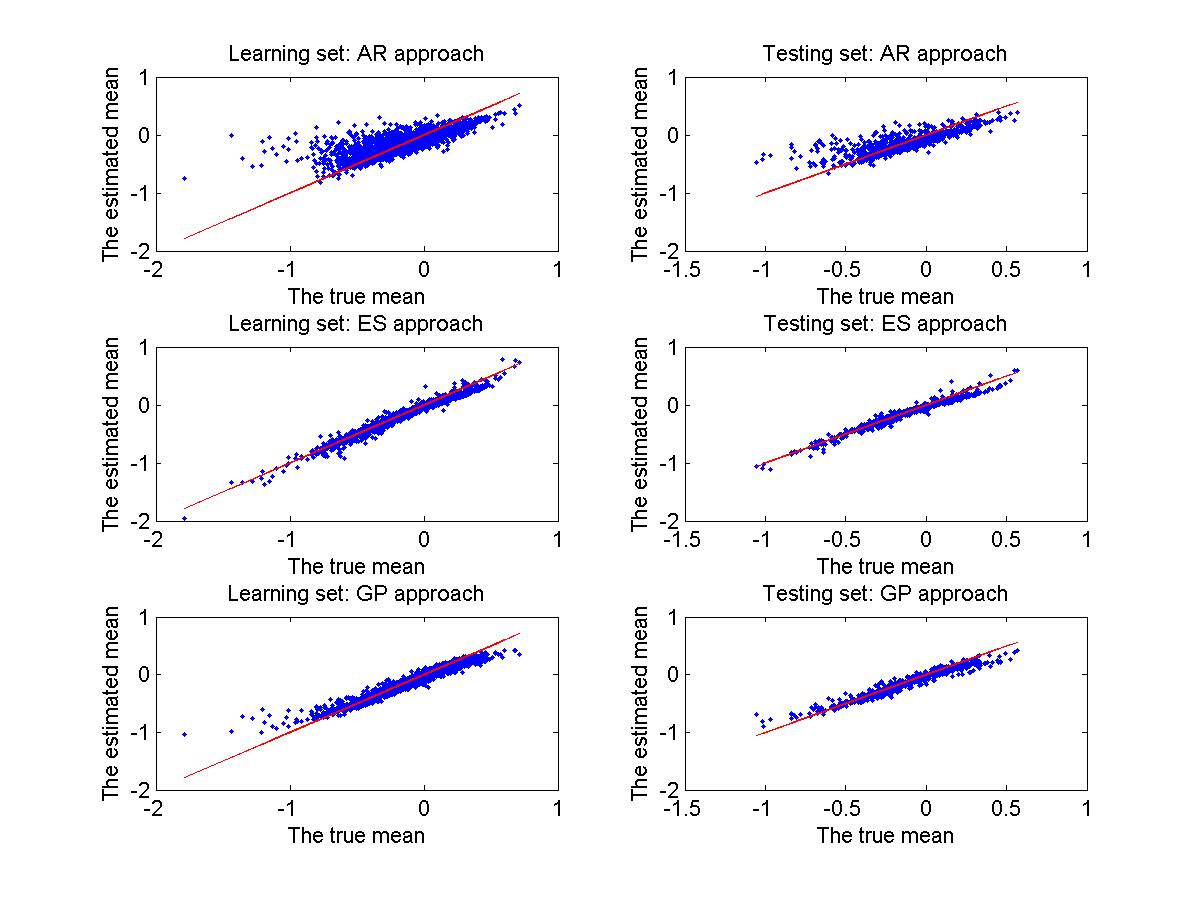}\\
  \caption{The plot of the estimated condition mean of the future return given $\mathcal{F}_{t}$ against its true conditional mean for the data set $r^{(2)}$.}\label{PolyAR}
\end{figure}
\begin{figure}
  \centering
  \includegraphics[width=4 in]{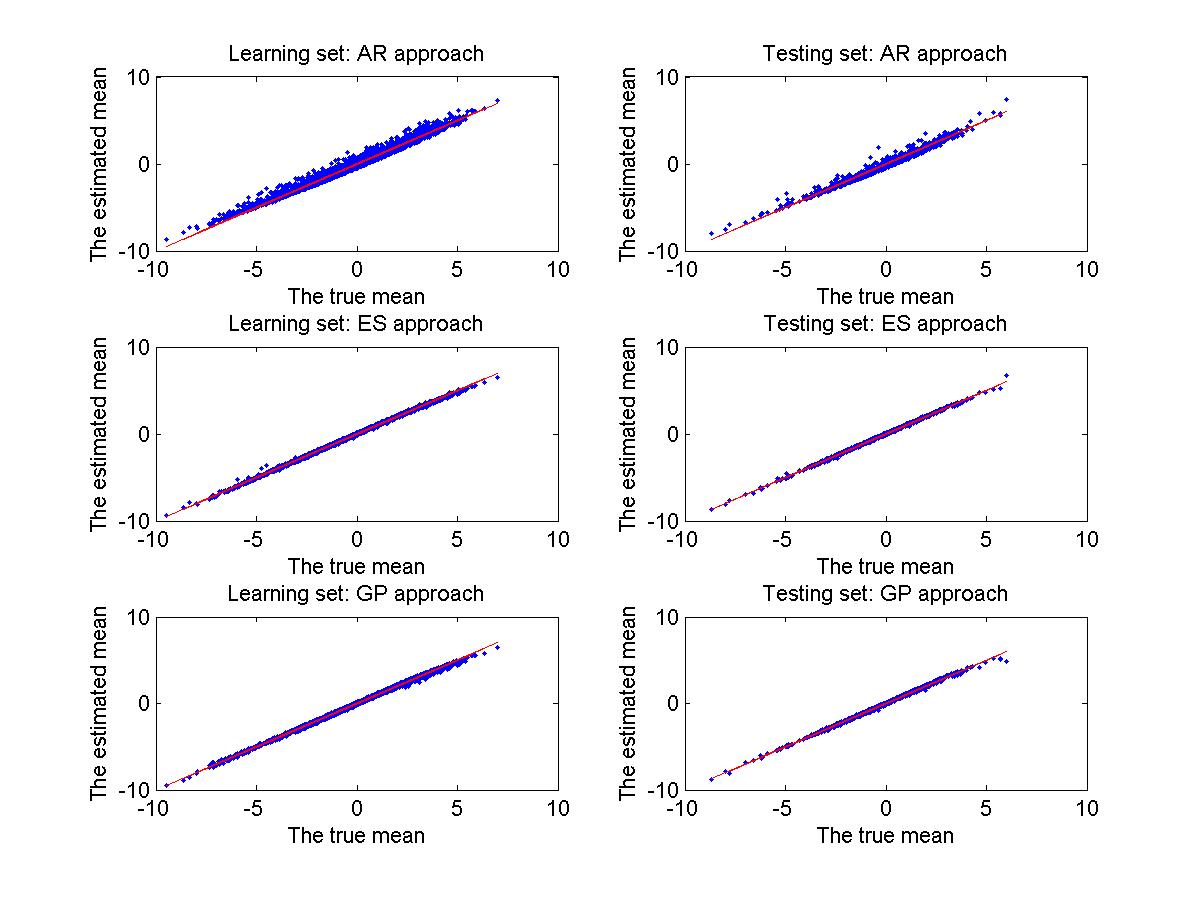}\\
  \caption{The mixture of Poly ARs models}\label{MixPolyAR}
\end{figure}\\
\\
\textbf{Comparison with the AR approach:} Based on Figure \ref{AR_fig}, Figure \ref{PolyAR} and Figure \ref{MixPolyAR}, the AR calibration outperforms the ES approach slightly in terms of MSE for $r^{(1)}$ while the ES approach outperforms the AR approach for $r^{(2)}$ and $r^{(3)}$ significantly.   Using the $R^{2}$ and the adjusted-$R^{2}$ as a measure of the goodness of fitting, the ES approach performs equally well as the AR approach for the data set $r^{(1)}$, while the ES approach outperforms the AR approach for the other two data sets. For example for the data set $r^{(2)}$ the ES approach produces $7.5\%$ more adjusted-$R^{2}$ than the AR approach. The summary of $R^{2}$ and the adjusted-$R^{2}$ statistics are given at Table \ref{R2} and Table \ref{adjusted_R2}. It is expected as the AR calibration is model-specific and it only performs very well when it is a right choice for calibration.\\
\begin{table}
\caption{The summary of the $R^{2}$. }\label{R2}
  \centering
 \begin{tabular}{|c|c|c|c|}
     \hline
       $R^{2}$ &  AR model & Poly AR model & Mix PolyAR model \\
      \hline
      AR& 0.401 & 0.0467 &0.834\\
      \hline
     ES& 0.402& 0.0674 & 0.855\\
     \hline
\end{tabular}
\end{table}
\begin{table}
\caption{The summary of the adjusted-$R^{2}$. }\label{adjusted_R2}
  \centering
 \begin{tabular}{|c|c|c|c|}
     \hline
       Adjusted-$R^{2}$ &   AR model & Poly AR model & Mix PolyAR model \\
      \hline
      AR& 0.4 & 0.0458 & 0.834\\
      \hline
     ES& 0.4& 0.0633 & 0.854 \\
     \hline
\end{tabular}
\end{table}
\\
\textbf{Comparison with the GP approach:} The GP approach produces a similar result to the ES approach in the data sets $r^{(2)}$ and $r^{(3)}$, while the GP approach outperforms the ES approach in the first dataset according to Figure \ref{AR_fig}, Figure \ref{PolyAR} and Figure \ref{MixPolyAR}. However, the computation time of the ES approach is less than $1/32$ of that of the GP approach, which is shown in Table \ref{Time}.
\begin{table}
\caption{The summary of computational time. }\label{Time}
  \centering
 \begin{tabular}{|c|c|c|c|}
     \hline
       Time(s) &  AR model & Poly AR model & Mix PolyAR model\\
      \hline
      ES& 3.779458& 4.430892 &3.387578\\
      \hline
     GP&136.038301 & 139.154203 & 116.702216 \\
     \hline
\end{tabular}
\end{table}\\
\\
\textbf{Cross validation tests:} After investigating a single run of three above methods, we implement cross validation tests to estimate how accurately those predictive models. The results of 20-fold cross validation tests shown in Figure \ref{ARCrossVal}, Figure \ref{PolyARCrossVal} and Figure \ref{MixARCrossVal}, confirm again that the previous fitting result. Those figures along with Table \ref{MSE_summary} suggest that the ES approach and the GP approach are more robust than the AR approach, especially when the model mis-specification occurs. Compared with a non-parametric counterpart - the GP approach, the ES approach generated comparable fitting results, but took much less computational time. We have tested time series with mean equations of different kinds, like periodic function and so on. We find that applying local regression on the signatures can produce the similar results as the GP with less computational time. The numerical examples we showed here are typical representatives , and our finding is valid in general. Based on those numerical evidence, the ES approach is an alternative way with much robustness and efficiency in dealing with the sequential data compared with the standard parametric and non-parametric time series models. 
\begin{figure}
  \centering
  \includegraphics[width=4 in]{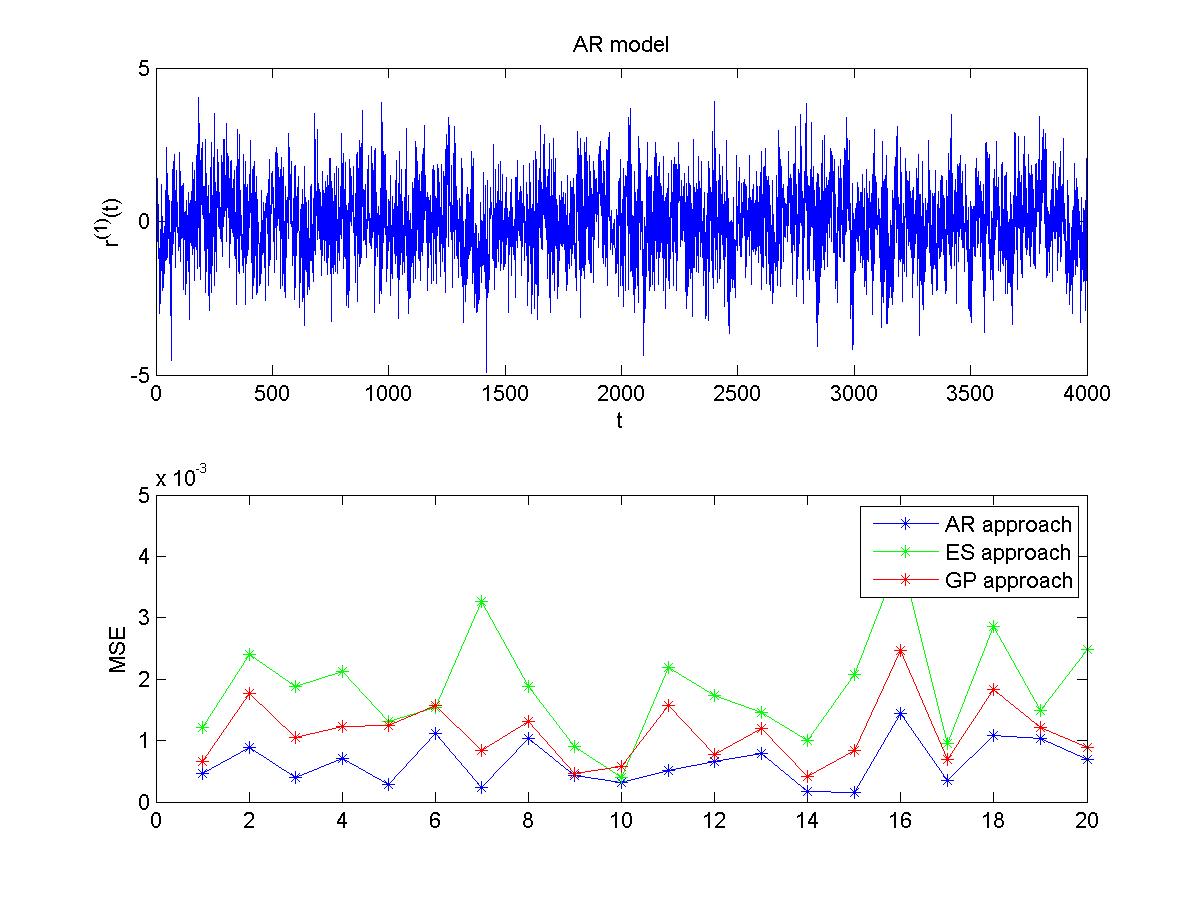}\\
  \caption{The plot of the dataset $r^{(1)}$ and the MSE plot for cross validation tests.}\label{ARCrossVal}
\end{figure}
\begin{figure}
  \centering
  \includegraphics[width=4 in]{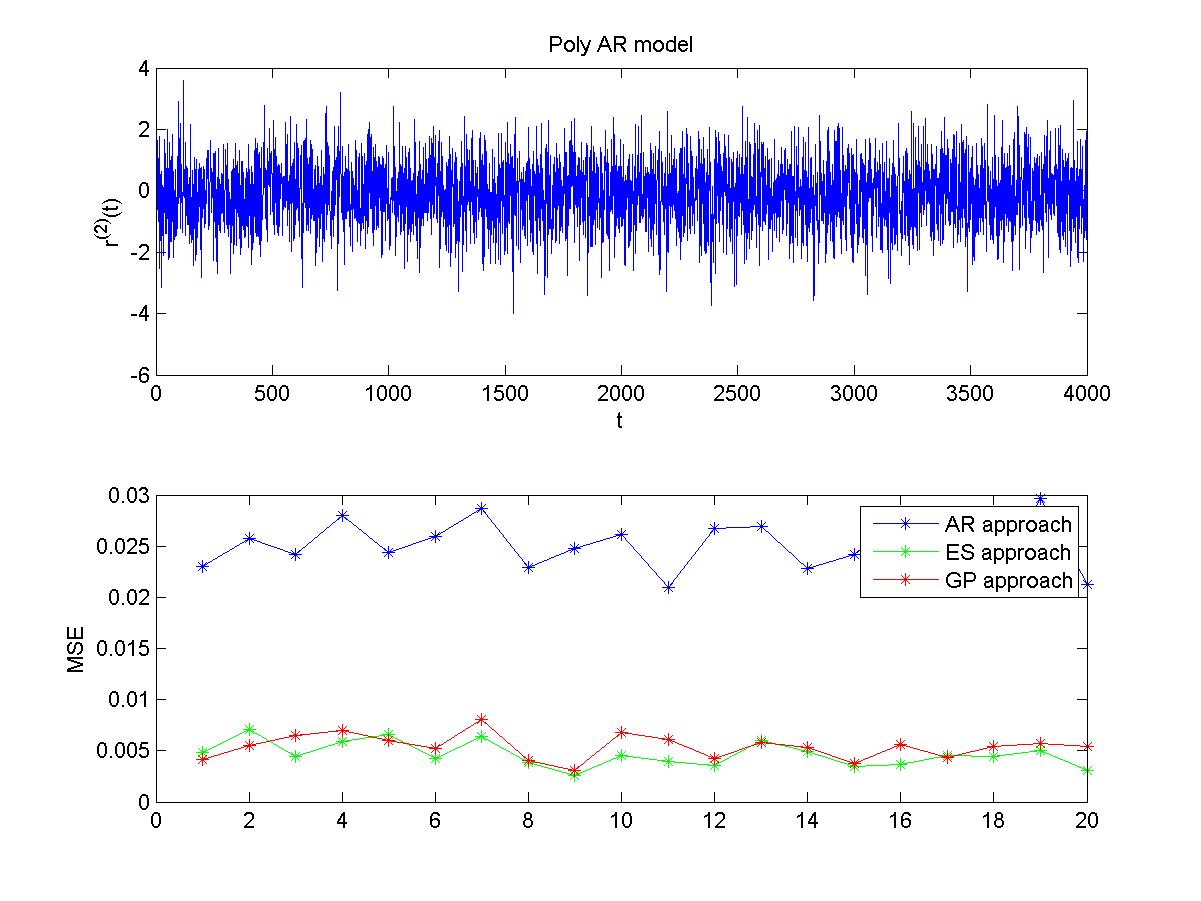}\\
  \caption{The plot of the dataset $r^{(2)}$ and  the MSE plot for cross validation tests.}\label{PolyARCrossVal}
\end{figure}
\begin{figure}
  \centering
  \includegraphics[width=4in]{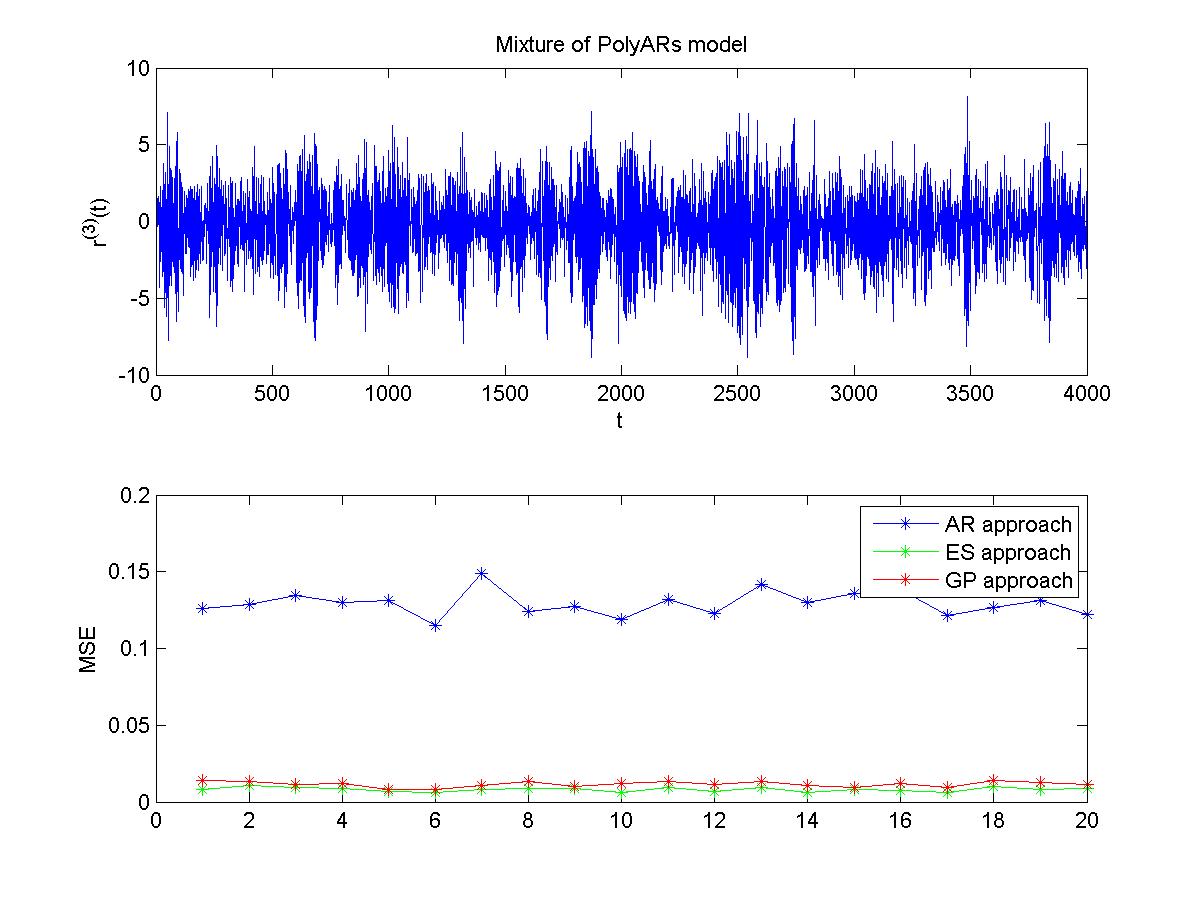}\\
  \caption{The plot of the dataset $r^{(3)}$ and the MSE plot for cross validation tests. }\label{MixARCrossVal}
\end{figure}
\begin{table}
\caption{The summary of $\text{MSE}_{cv}$}\label{MSE_summary}
\begin{center}
\begin{tabular}{|c|c|c|c|}
  \hline
 $\text{MSE}_{cv}$ & AR model & Poly AR model& Mix PolyAR model \\
  \hline
 AR & 6.4e-4 & 0.025 & 0.1294\\
  \hline
  ES & 0.0019& 0.0047& 0.0082 \\
  \hline
  GP & 0.0011  & 0.0054 & 0.0116 \\
  \hline
\end{tabular}\label{Table_MixAR_MSE}
\end{center}
\end{table}

\appendix

\section{Rough Paths}\label{AppendixRoughPath}
\begin{definition}[Rough Path] Let $\mathbf{Z}: \Delta_{T} \rightarrow T^{\lfloor p \rfloor}(E)$ be a continuous map, where $\Delta_{T}:= \{(s, t) \vert 0 \leq s \leq t \leq T\}$. If the following conditions are satisfied:
\begin{enumerate}
\item The functional $\mathbf{Z}$ is a multiplicative functional, i.e. $\pi^{()}(\mathbf{Z}(s, u)) \equiv 1$ and for any $(s, u, t)$ such that $ 0 \leq s \leq u \leq t \leq T$,  it holds that
\begin{eqnarray}\label{Chen2}
\mathbf{Z}(s, u) \otimes \mathbf{Z}(u, t) = \mathbf{Z}(s, t). 
\end{eqnarray}
\item $\mathbf{Z}$ has finite $p$-variation, that is, for every $i = 1, \dots, \lfloor p \rfloor$, and $(s,t) \in \Delta_{T}$, it satisfies
\begin{eqnarray*}
\vert\vert \mathbf{Z}_{s, t}^{i} \vert\vert \leq \frac{(\omega(t -s))^{i/ p}}{\beta (i/p)!}
\end{eqnarray*}
where $\vert \vert . \vert \vert$ is the Euclidean norm is the appropriate dimension, $\omega$ is a control function and $\beta$ is a real number depending only on $p$.
\end{enumerate}
then we say that $\mathbf{Z}$ is a $p$-rough path. The space of $p$-rough paths in $E$ is denoted by $\Omega_{p}(E)$.
\end{definition}
It is noted that Equation (\ref{Chen2}) comes from Chen's identity of signatures.
\begin{definition}[Geometric Rough Path] A geometric $p$-rough path is a $p$-rough path that can be expressed as a limit of $1$-rough path in the $p$-variation distance $d_{p}$, defined as follows: for any $\mathbf{X}, \mathbf{Y}$ continuous functions from $\Delta_{T}$ to $T^{n}(E)$, such that
\begin{eqnarray*}
d_{p}(\mathbf{X}, \mathbf{Y}) = \max_{1 \leq i \leq \lfloor p \rfloor} \sup_{\mathcal{D} \subset [0, T]} \left( \sum_{l} \vert\vert \mathbf{X}_{t_{l-1}, t_{l}}^{i} -  \mathbf{Y}_{t_{l-1}, t_{l}}^{i}\vert\vert^{p / i}\right)^{i/p},
\end{eqnarray*}
where $\mathcal{D} = \{t_{l}\}_{l}$ is taken over all possible finite partition of $[0, T]$. The space of geometric $p$-rough paths in $E$ is denoted by $G\Omega_{p}(E)$.
\end{definition}
The following theorem is the main ingredient of rough paths theory and called "extension theorem", which states that every multiplicative functional of degree $n$ with finite $p$-variation can be extended in a unique way to a multiplicative functional with finite $p$-variation of arbitrary high degree, provided $n \geq \lfloor p \rfloor$ .
\begin{theorem}[Theorem 3.7, \cite{RoughPaths}] Let $p \geq 1$ be a real number and $k \geq 1$ be an integer. Let $\mathbf{X}: \Delta_{T} \rightarrow T^{k}(E)$ be a multiplicative functional with finite $p$-variation. Assume that $k \geq \lfloor p \rfloor$. Then there exists a unique extension of $\mathbf{X}$ to a multiplicative functional $\hat{X}: \Delta_{T} \rightarrow T^{k+1}(E)$ of finite $p$-variation.
\end{theorem}
\section{Time series models}
\begin{definition}[AR model]\label{AR}
Let $\{r_{t}\}$ be a time series. The notation $AR(p)$ indicates an autoregressive model of order $p$. The $AR(p)$ model with parameters $\Phi = [\Phi_{0}, \dots, \Phi_{p}]$ is defined as follows:
\begin{eqnarray*}
r_{t} = \Phi_{0} + \Phi_{1}r_{t-1}+ \dots + \Phi_{p}r_{t-p} + a_{t},
\end{eqnarray*}
where $p$ is a non-negative integer and $a_{t}$ is a white noise with mean zero and variance $\sigma_{a}^{2}$.
\end{definition}
\begin{definition}[ARCH model]\label{Def_ARCH}
We say that a time series $\{r_{k}\}$ satisfies the assumptions of $ARCH(q)$ model, if and only if the error terms $\varepsilon_{k}$ defined by return residuals with respect to a mean process, (i.e. $\varepsilon_{k} = r_{k} - \mu_{k}$) satisfy the following equations:
\begin{eqnarray*}
\varepsilon_{k} &=& \sigma_{k} z_{k};\\
\sigma_{k}^{2} &=& \alpha_{0} + \sum_{i = 1}^{q} \alpha_{i} \varepsilon_{k-i}^{2},
\end{eqnarray*}
where $z_{k}$ is a strong white noise, $\{a_{i}\}_{i =0}^{q}$ are all constants such that $a_{0} > 0$ and $a_{i} \geq 0, \forall i = 1, \cdots, q$ and the mean equation $\mu_{t}$ is a linear combination of the lagged returns, i.e. there exists a positive integer $Q$ and constants $\{\beta_{i}\}_{i=1}^{Q}$, such that
\begin{eqnarray*}
\mu_{t} = \beta_{0}+\sum_{i=1}^{Q}\beta_{i}r_{t-i}.
\end{eqnarray*}
\end{definition}
\section{Numerical examples}\label{AppendixParameter}
\subsection{The Poly AR model}
For the second data set, the AR calibration gives the estimator for $\Phi$ as follows:
\begin{eqnarray*}
\hat{\Phi} := [\hat{\Phi}_{0}, \hat{\Phi}_{1}, \hat{\Phi}_{2}, \hat{\Phi}_{3}] =[-0.11621, 0.018418, -0.023661, 0.21904].
\end{eqnarray*}
We specify $n = 4$ and give all possible indices $I$ and the corresponding estimated coefficients $\{\hat{f}^{I}\}$ as follows:
\begin{eqnarray*}\small
\begin{array}{ccccccc}
\hline
I & ()  & (1) & (2) & (1, 1) &(1,2) & (2,1) \\
\hat{f}^{I} &0&-0.019034&0& 0& 0 & 0 \\
\hline
I &(2,2)&(1,1,1) &(1,1,2) &(1,2,1) &(1,2,2)& (2,1,1)\\
\hat{f}^{I} & 0& 0 &0.11788&0.10232&0&0.11755\\
\hline
I & (2,1,2)& (2,2,1)&(2,2,2)& (1,1,1,1)& (1,1,1,2) &(1,1,2,1)\\
\hat{f}^{I}& -0.015945& 0 & 0 &0 & 0 &0\\
\hline
I & (1,1,2,2) &(1,2,1,1)& (1,2,1,2) &(1,2,2,1)& (1,2,2,2) &(2,1,1,1) \\
\hat{f}^{I}&  -0.017901& 0& 0.071969 &-0.033093& -0.014362 & 0\\
\hline
I& (2,1,1,2)&(2,1,2,1)& (2,1,2,2) &(2,2,1,1) &(2,2,1,2)&(2,2,2,1)\\
\hat{f}^{I} & 0&-0.022093& -0.0061961& -0.0076586& - 0.0099546& -0.01357\\
\hline
I&  (2,2,2,2)&&&& &\\
\hat{f}^{I}& -0.020399&&&& &\\
\hline
\end{array}
\end{eqnarray*}
The GP approach gave fitted parameters $\ln(\lambda) = 1.6529, \ln(h) = 0.0045$.
\subsection{The Mixture of Poly ARs model}
For the third data set, the AR calibration gives the estimator for $\Phi$ as follows:
\begin{eqnarray*}
\hat{\Phi} := [\hat{\Phi}_{0}, \hat{\Phi}_{1}, \hat{\Phi}_{2}, \hat{\Phi}_{3}] =[-0.4909, -0.57929, -0.13741, 0.6482].
\end{eqnarray*}
Let $n = 4$. The inferred parameters based on the ES approach are given as follows:
\begin{eqnarray*}\small
\begin{array}{ccccccc}
\hline
I & ()  & (1) & (2) & (1, 1) &(1,2) & (2,1) \\
\hat{f}^{I} &-0.1655&0&  -0.16315 & 0& 0 & 0 \\
\hline
I &(2,2)&(1,1,1) &(1,1,2) &(1,2,1) &(1,2,2)& (2,1,1)\\
\hat{f}^{I}& -0.25368 &0&0&0&-0.089823& 0.66242\\
\hline
I & (2,1,2)& (2,2,1)&(2,2,2)& (1,1,1,1)& (1,1,1,2) &(1,1,2,1)\\
\hat{f}^{I}& 0 & 0 &0.0364850 & 0 &0&  0\\
\hline
I & (1,1,2,2) &(1,2,1,1)& (1,2,1,2) &(1,2,2,1)& (1,2,2,2) &(2,1,1,1) \\
\hat{f}^{I}& 0&0.35718& -0.029195& -0.001156& -0.057973 & 0\\
\hline
I& (2,1,1,2)&(2,1,2,1)& (2,1,2,2) &(2,2,1,1) &(2,2,1,2)&(2,2,2,1)\\
\hat{f}^{I}& 0& 0.017382& -0.091709&  0.0043497&0&0.01107\\
\hline
I&  (2,2,2,2)&&&& &\\
\hat{f}^{I}& 0.2019&&&& &\\
\hline
\end{array}
\end{eqnarray*}
The GP approach gave the fitted parameters $\ln(\lambda) = 2.8852, \ln(h) = 1.9846$.
\section*{Acknowledgements}
The second author and the third author would like to thank Remy Cottet (AHL), Anthony Ledford (AHL) and Syed Ali Asad Rizvi (the Oxford-Man Institute of Quantitative Finance) for their valuable suggestions. All the authors would like to thank Oxford-Man Institute for their funding.

\bibliographystyle{plain}
\bibliography{myref}

\end{document}